\title{Secret Sharing Schemes from Correlated Random Variables and Rate-Limited Public Communication}
 \author{Rumia Sultana, R\'{e}mi A. Chou\\ \thanks{R. Sultana is with the EECS Department at Embry-riddle Aeronautical University. R. Chou is with the CSE Department at The University of Texas at Arlington.  This work was supported in part by NSF grants CCF-1850227 and CCF-2401373. Part of the results has been presented at the 2021 IEEE International Symposium on Information Theory~\cite{sultana2021low}.
 E-mails: sultanr1@erau.edu,  remi.chou@uta.edu. }
}
\date{Jan 2020}
\DeclareFontFamily{OT1}{pzc}{}
\DeclareFontShape{OT1}{pzc}{m}{it}{<-> s * [1.10] pzcmi7t}{}
\DeclareMathAlphabet{\mathpzc}{OT1}{pzc}{m}{it}
\newtheorem{definition}{Definition}
\newtheorem{corollary}{Corollary}
\newtheorem{lem}{Lemma}
\newtheorem{thm}{Theorem}
\newtheorem{prop}{Proposition}
\newtheorem{ex}{Example}
\DeclareMathOperator*{\argmax}{arg\,\!max}
\begin{document}
\maketitle
\begin{abstract}
A dealer aims to share a secret with participants so that only predefined subsets can reconstruct it, while others learn nothing. The dealer and participants access correlated randomness and communicate over a one-way, public, rate-limited channel. For this problem, we propose the first explicit coding scheme able to handle arbitrary access structures and achieve the best known achievable rates, previously obtained non-constructively. Our construction relies on lossy source coding coupled with distribution approximation to handle the reliability constraints, followed by universal hashing to handle the security constraints. 
As a by-product, our construction also yields explicit coding schemes for secret-key generation under one-way, rate-limited public communication that, unlike prior work, achieves the capacity for arbitrary source correlations and do not require a pre-shared secret to ensure strong secrecy.
\end{abstract}

\section{Introduction}
Secret sharing was first introduced in \cite{shamir1979share} and \cite{blakley1979safeguarding}, where a dealer distributes a secret among participants such that only predefined subsets can recover it, while any other sets of colluding participants learn nothing about the secret. A classical example is a multi-person authorization system, where several authorized individuals must jointly approve a sensitive operation, such as launching a missile, so that no single individual can execute the action alone. In this paper, unlike in~\cite{shamir1979share,blakley1979safeguarding}, we consider a secret sharing problem where noisy resources are available to the dealer and the participants. Specifically, the participants and dealer have access to samples of correlated random variables (e.g., obtained in a wireless communication network from channel gain measurements  after  appropriate  manipulations~\cite{wilson2007channel,wallace2010automatic,ye2010information,pierrot2013experimental}), in addition to a one-way (from the dealer to the participants),  public, and rate-limited communication channel. 
Such secret sharing models that rely on noisy resources have been introduced in \cite{zou2015information} for wireless channels and in \cite{csiszar2010capacity,chou2018secret} for source models to avoid the  assumption, made in \cite{shamir1979share,blakley1979safeguarding}, that individual secure channels are available for free between each participant and the dealer. An example application is two-factor authentication, where the two factors correspond to noisy physical sources, for instance, two wireless or sensor-based measurements. Authentication succeeds only when both factors are jointly verified, while each alone provides no information about the secret. This two-factor verification  ensures that compromising a single source is insufficient to gain access.

\subsection{Contributions}

In this paper, we propose the first explicit secret sharing scheme for source models with arbitrary access structures and rate-limited public communication. 

Our construction relies on two coding layers. The first  layer handles the reliability constraints via vector quantization implemented with polar codes \cite{arikan2010source} to support rate-limited public communication. The second  layer handles the security constraints via universal hashing \cite{carter1979universal}. While two-layer constructions have been used in previous work (reviewed in the next section) for simpler settings, a main difficulty in this work is the analysis of the security guarantees after combining the first and second coding layers. Specifically,  one needs to carefully design the first layer to ensure it possesses the appropriate properties for enabling a tractable analysis when combined with the second layer.

While the first coding layer is related to Wyner-Ziv coding, for which explicit polar coding schemes exist, e.g., \cite{korada2010polar}, in our model we face two  challenges: (i) one needs to precisely control the distribution output of the encoder to enable a precise analysis of the security guarantees when one combines the first coding layer with the second coding layer, (ii) unlike the standard problem of Wyner-Ziv coding,  because of the presence of an access structure in our setting, the statistics of the side information available at the decoder is not fully known at the encoder but only known to belong to a given set of probability distributions. One can partially reuse a Block-Markov coding idea from \cite{ye2014universal}, which considers the simpler problem of lossless source coding with compound side information, but we need to develop a novel scheme able to simultaneously perform lossy source coding with compound side information and precisely control the encoder distribution~output.

As a by-product, our construction yields explicit and capacity-achieving coding schemes for the related problem of secret-key generation from correlated randomness and rate-limited one-way public communication \cite{csiszar2000common}. However, unlike prior explicit coding schemes 
 \cite{chou2013polar},  our construction does \emph{not} need a pre-shared secret to ensure strong~secrecy, and achieves the capacity for arbitrary source correlations. 

\subsection{Related works}
Secret sharing models with noisy  resources  available  to  the  dealer  and  the  participants have been studied for
channel models \cite{zou2015information} and source models  \cite{csiszar2010capacity,chou2018secret,chou2019biometric,rana2020secret,chou2021distributed,miller2026secret}, which are related to compound wiretap channels \cite{liang2009compound} and compound secret-key generation  \cite{tavangaran2016secret,bloch2010channel, chou2013secret,yachongka2024secret,yachongka2025secret}, respectively, in
that multiple reliability and security constraints need to be
satisfied simultaneously. However,  all these references only prove the existence of coding schemes, whereas in this paper we focus on the construction of explicit coding schemes. 

While no explicit coding scheme has been proposed in the literature for  secret sharing source models with arbitrary access structures, several works have focused on the simpler problem of secret-key generation between two parties from correlated random variables and public communication~\cite{maurer1993secret},~\cite{ahlswede1993common}. Specifically, explicit coding schemes that achieve optimal secret-key rates for this problem have been developed in the case of rate-unlimited public communication by successively handling the reliability requirement and the secrecy requirement  { employing} source coding with side information and universal hashing, respectively \cite{bennett1995generalized,cachin1997linking,maurer2000information}. However, extending their results to the rate-limited public communication setting \cite{chou2014separation,nitinawarat2012secret} requires vector quantization, for which explicit constructions are challenging and have not been proposed in this context. Consequently, these prior works cannot address the rate-limited communication scenario considered in this paper.

Still considering secret-key generation between two parties, another approach  jointly handles the reliability and secrecy requirements via polar codes and yields optimal key rates for rate-unlimited communication~\cite{chou2013polar,renes2013efficient} and rate-limited communication~\cite{chou2013polar}. However, these works do not seem to easily extend to our setting as   \cite{chou2013polar,renes2013efficient} only consider two parties, and in the case of rate-limited public communication,~\cite{chou2013polar} requires a pre-shared secret between the users. While this pre-shared secret has a negligible rate in~\cite{chou2013polar}, such a resource is forbidden in this~paper.
  
 Note that explicit coding schemes for secret-key generation involving more than two parties have also been proposed in~\cite{nitinawarat2010secret,nitinawarat2010perfect,chou2013polar,chou2019secret} but only when the correlations of the random variables observed by the participants have specific structures. Hence, these works cannot be applied to our setting as we consider a source with arbitrary correlations. 
  
  Finally, note that \cite{chou2020unified}  provides explicit coding schemes for secret sharing channel models with arbitrary access structures~\cite{chou2020unified}. But, again, the coding scheme in \cite{chou2020unified} cannot be applied here as it cannot handle rate-limited communication.

\subsection{Organization of the paper}
Section~\ref{sec:problem statement} presents the problem formulation, and Section~\ref{sec:main results} states the main results. An auxiliary result required for the scheme construction and its proof is developed in Sections~\ref{sec:auxiliary result} and~\ref{sec:proof of theorem 3}, respectively. The proofs of the main results are provided in Sections~\ref{sec:proof of theorem 1} and~\ref{proof of theorem 2}. Concluding remarks are given in Section~\ref{concluding remarks}.

\section{Notation}
For $a,b\in \mathbb{R}$, define $\llbracket a,b \rrbracket \triangleq [\lfloor a \rfloor, \lceil b \rceil] \cap \mathbb{N}$, $[ a] \triangleq \llbracket  1,a \rrbracket$, and  $[a]^+ \triangleq \max (0,a)$. 
The components of a vector $X^{1:N}$ of length $N \in \mathbb{N}$ are denoted with superscripts, i.e., $X^{1:N} \triangleq (X^1,X^2,\ldots,X^{N})$. For any set $\mathcal{A} \subset [N]$, let $X^{1:N}[\mathcal{A}]$   be the components of $X^{1:N}$ whose indices are in $\mathcal{A}$. For two  probability  distributions $p_X$ and $q_X$ defined  over  the  same alphabet
$\mathcal X$, define the  relative entropy 
$
   \mathbb{D}(p_X \lVert q_X)\triangleq  \sum_{x \in \mathcal{X}} p_X(x) \log \frac{p_X(x)}{q_X(x)},
$ and the variational distance 
$    \mathbb{V}(p_X,q_X)\triangleq \sum_{x \in \mathcal{X}}| p_X(x)-q_X(x)|.
$
For two  probability distributions $p_{XY}$ and $q_{XY}$, both  defined  over 
$\mathcal X \times \mathcal{Y}$, define the  conditional relative entropy between $p_{Y|X}$ and $q_{Y|X}$ as 
$
  \mathbb {D}( p_{Y|X} \lVert q_{Y|X})\triangleq  \sum_{x\in \mathcal{X}} p_X(x)  \sum_{y \in \mathcal{Y} } p_{Y|X}(y|x)\log \frac{p_{Y|X}(y|x)}{q_{Y|X}(y|x)} .
$
Let $2^{\mathcal{S}}$ denote the power set of $\mathcal{S}$. Let $\bigtimes$ denote the Cartesian product. 
\label{sec:notation}
\section{Problem Statement} 

 Consider a dealer, $J$ participants, and a discrete memoryless source  $(\mathcal X\times \mathcal Y_{[J]},p_{XY_{[J]}}) $, where $\mathcal{X} \triangleq \{ 0,1\}$,  $\mathcal{Y}_{[J]} \triangleq \bigtimes_{j\in [J]} \mathcal{Y}_j$ with $(\mathcal{Y}_j)_{j \in [J]}$  $J$ finite alphabets.  Let $\mathbb{A}$ be a set of subsets of $[J]$ such that if $\mathcal{T}\subseteq [J]$ contains a set that belongs to $\mathbb{A}$, then $\mathcal{T}$ also belongs to $\mathbb{A}$, i.e., $\mathbb{A}$ is a monotone access structure~\cite{benaloh1988generalized}. Then, define $\mathbb{U}\subseteq 2^{[J]}\backslash\mathbb{A}$ as the set of all colluding subsets of users who must not learn any information about the secret. For any $\mathcal{U}\in \mathbb{U}$ and $\mathcal{A}\in \mathbb{A}$, we use the notation $Y^{1:N}_{\mathcal{U}}\triangleq (Y^{1:N}_j)_{j\in \mathcal{U}}$ and $Y^{1:N}_{\mathcal{A}}\triangleq (Y^{1:N}_j)_{j\in \mathcal{A}}$. Moreover, we assume that
the dealer can communicate with the participants over a one-way, rate-limited,
noiseless,\footnote{If the public noiseless channel is replaced by a noisy one, channel coding can be used to recover a noiseless link.}  and public communication channel.

             \begin{definition}
              A $(2^{NR_{s}}, R_p, \mathbb{A},\mathbb{U}, N)$ secret sharing scheme  consists of:
            \begin{itemize}
                \item A public message alphabet $\mathcal{M} \triangleq [2^{NR_p}]$;
                \item An alphabet $\mathcal{S}\triangleq [2^{NR_s}]$;
                \item Two encoding functions $f : \mathcal X^{N} \rightarrow  \mathcal{M}$ and $h : \mathcal X^{N} \rightarrow  \mathcal{S}$;
                \item $|\mathbb{A}|$ 
                      decoding functions $g_{\mathcal{A}} : \mathcal M \times \mathcal Y_{\mathcal{A}}^{N} \rightarrow \mathcal S  $ for $\mathcal{A}\in \mathbb{A}$;
            \end{itemize}
            and operates as follows
\begin{enumerate}
 \item The dealer observes 
 $ X^{1:N}$ and  Participant $j \in [J]$ observes $Y_j^{1:N}$;             
\item The dealer transmits $M \triangleq f(X^{1:N}) \in \mathcal{M}$ over the public communication channel; 
\item The dealer computes the secret $S\triangleq h(X^{1:N})\in \mathcal{S}$;
\item Any subset of participants $\mathcal{A}\in \mathbb{A}$
                      can compute an estimate of $S$ as $\hat S(\mathcal{A}) \triangleq g_{\mathcal{A}}(M,Y_{\mathcal{A} }^{1:N})$. 
                  \end{enumerate}\label{dfn3}
              \end{definition}
                \begin{definition}
                A rate  $R_s$ is achievable if there exists a sequence of $(2^{NR_s}, R_p, \mathbb{A},\mathbb{U}, N)$ secret sharing schemes  
                such~that 
                     \begin{align}
           \displaystyle \lim_{N \rightarrow +\infty} \displaystyle\max_{\mathcal{A} \in \mathbb{A}}\mathbb{P}[\hat S(\mathcal{A})\neq S ] &=0,\phantom{-}(\text{Reliability})\label{eqnrel}\\
                   \displaystyle \lim_{N \rightarrow +\infty}  \displaystyle\max_{\mathcal{U}\in \mathbb{U}} 
             I(S;MY^{{1:N}}_{\mathcal{U}}  
             ) &=0,\phantom{-}(\text{Strong Security})\label{eqnsec}\\
                  \displaystyle \lim_{N \rightarrow +\infty}  \log |\mathcal{S} 
                 |-H(S) &=0.\phantom{-}(\text{Secret Uniformity})\label{eqn23}
         \end{align} $(\ref{eqnrel})$ means that any subset of participants in $\mathbb{A}$ is able to recover the secret, $(\ref{eqnsec})$ means that any subset of participants in $\mathbb{U}$ cannot obtain information about the secret, while $(\ref{eqn23})$ means that
the secret is nearly uniform. The secret capacity $C_s( R_p)$  is defined as the supremum of all achievable rates.
               \end{definition}\label{sec:problem statement}
               The setting is illustrated in Figure~\ref{figFGD} for $J=3$ participants, where the qualified sets are $\mathbb{A } = \{\{1, 2\}, \{2, 3\}, \{1, 2, 3\}\}$ and the unqualified sets are $\mathbb{U}= 2^{[J]}\backslash \mathbb{A }$. In this example, the dealer wishes to share a single secret $S$ among three participants such that only the subsets in $\mathbb{A}$ can recover it, while any subset in $\mathbb{U}$ learns no information about it.
Specifically, Participants $1$ and $2$, or Participants $2$ and $3$, can collaborate by pooling their shares to reconstruct the secret $S$. In contrast, Participant~$1$ or Participant $3$ alone, remains unqualified and gains no information about the secret. The set of participants $\{1,2,3\}$ is also qualified, since if a smaller qualified subset can recover the secret, so can any larger subset that contains it.
               \begin{figure}
\centering
 \includegraphics[width=8.9 cm]{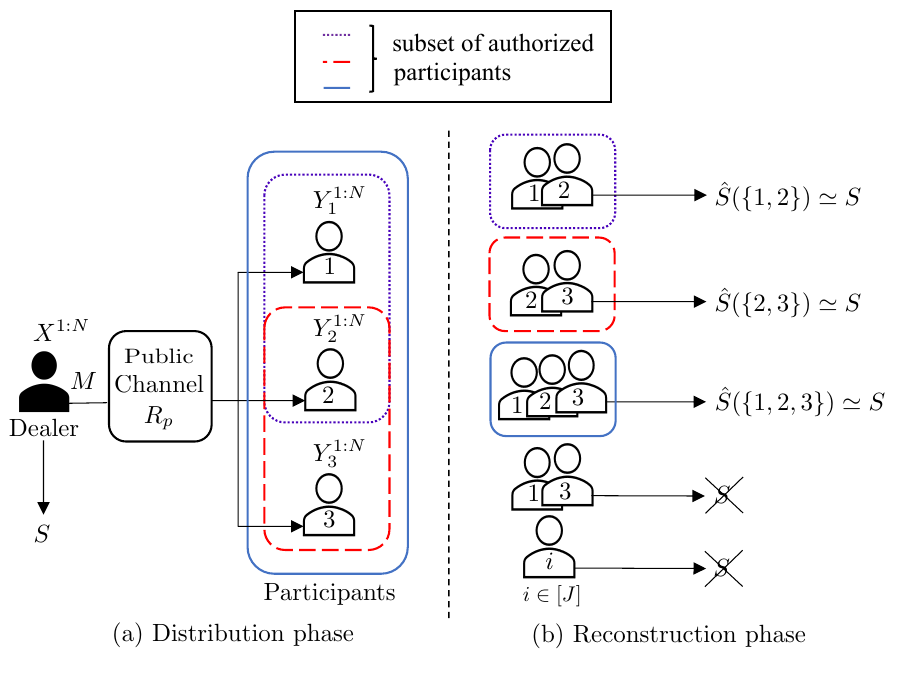}
 \caption{Secret sharing from correlated randomness and rate-limited public communication with $J=3$, $\mathbb{A } = \{\{1, 2\}, \{2, 3\}, \{1, 2, 3\}\}$, $\mathbb{U}= 2^{[J]}\backslash \mathbb{A }  = \{\{1, 3\}, \{1\}, \{2\}, \{3\}\}$. }
    \label{figFGD}
\end{figure}
               \section{Main Results}
               Section \ref{sec:main results} gives our main results for the secret sharing setting of Section \ref{sec:problem statement}. Section \ref{sec:key} applies our result to the problem of secret-key generation \cite{ahlswede1993common,maurer1993secret,csiszar2000common}.
          \subsection{Secret Sharing}
 \begin{prop}The coding scheme of Section \ref{seccodingscheme} achieves the secret  rate
\begin{align*}\nonumber
  R_s=&\displaystyle{\max_{\substack{U\\   U-X-Y_{[J]}}}}\left[\displaystyle{\min_{\mathcal{A}\in \mathbb{A}}}I({U;{Y_{\mathcal{A}}}})- {\displaystyle \max_{\mathcal{U}\in \mathbb{U}}}I(U;Y_{\mathcal{U}})\right]^+\\
&\phantom{-lll-} \text{subject to}\phantom{-} R_p \geq
   \displaystyle\max_{ \mathcal{A}\in \mathbb{A}}I(U;X|Y_{\mathcal{A}}) .
\end{align*} 
\label{thm2}
\end{prop}
\begin{proof}
    The proof is presented in Section \ref{sec:proof of theorem 1} and relies on an auxiliary result presented in Section \ref{sec:auxiliary result}.
\end{proof}
Building on the coding scheme of Proposition \ref{thm2}, we obtain the following result.
\begin{thm}
The coding scheme of Section \ref{part II} achieves the secret  rate
\begin{align*}\nonumber
  R_s=&\displaystyle{\max_{\substack{U,V\\\\ U-V-X-Y_{[J]}
}}}\left[\displaystyle{\min_{\mathcal{A}\in \mathbb{A}}}I({V;{Y_{\mathcal{A}}}|U})- {\displaystyle \max_{\mathcal{U}\in \mathbb{U}}}I(V;Y_{\mathcal{U}}|U)\right]^+\\ 
 & \phantom{}\text{subject to } R_p \geq \displaystyle\max_{ \mathcal{A}\in \mathbb{A}}I(U;X|Y_{\mathcal{A}})\!+\!\max_{ \mathcal{A}\in \mathbb{A}}I(V;X|UY_{\mathcal{A}}).
\end{align*}
\label{thm3}
\end{thm}
\begin{proof}
 See Section \ref{proof of theorem 2}. 
\end{proof}
Note that Theorem \ref{thm3} recovers Proposition \ref{thm2} by choosing $U=\emptyset$.

The achievable rates in Proposition \ref{thm2} and Theorem~\ref{thm3} could be obtained from \cite{tavangaran2016secret}. However, \cite{tavangaran2016secret} only provides an existence result and not an explicit coding scheme. 

In the case of rate-limited public communication,  converse results remain an open problem and are not addressed in \cite{tavangaran2016secret}. However, for rate-unlimited public communication, we have the following  corollary when all the participants  are needed to reconstruct the secret, and  
any strict subsets of participants in $[J]$ must not learn any information about the secret.
\begin{corollary} \label{cor1}
When $ R_p = +  \infty$, $\mathbb{A} = \{ [J] \}$, $\mathbb{U}= 2^{[J]}\backslash \mathbb{A }$, 
the coding scheme of Section \ref{seccodingscheme} achieves  the secret capacity  
\begin{align}\nonumber
\lim_{R_p \rightarrow {}{+\infty}} C_s(R_p)=\displaystyle{\min_{\mathcal{U}\subsetneq [J]}{I({X;{Y_{[J]}|Y_{\mathcal{U}}}}}}).
  \end{align}
\end{corollary}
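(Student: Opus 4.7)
The argument splits naturally into an achievability part obtained by specializing Theorem~\ref{thm2} and a classical converse.

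For the achievability, I would start from \eqref{eqnthm2} with $\mathbb{A}=\{[J]\}$. The rate constraint $R_p\geq\max_{\mathcal{A}\in\mathbb{A}} I(U;X|Y_\mathcal{A})=I(U;X|Y_{[J]})$ is vacuous when $R_p=+\infty$, so any $U$ with $U-X-Y_{[J]}$ is admissible. The natural choice is $U=X$, which makes $\min_{\mathcal{A}\in\mathbb{A}} I(U;Y_\mathcal{A})=I(X;Y_{[J]})$. For every $\mathcal{U}\subsetneq[J]$, since $Y_\mathcal{U}$ is a subvector of $Y_{[J]}$, the chain rule gives
\[
I(X;Y_{[J]})-I(X;Y_\mathcal{U})=I(X;Y_{[J]}\mid Y_\mathcal{U}),
\]
so \eqref{eqnthm2} reduces to $\min_{\mathcal{U}\subsetneq[J]} I(X;Y_{[J]}\mid Y_\mathcal{U})$, which is nonnegative (hence the $[\cdot]^+$ is superfluous).

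To confirm that $U=X$ is optimal within \eqref{eqnthm2} (so the simplification is tight), I would invoke data processing. From $U-X-Y_{[J]}$ and $Y_\mathcal{U}\subseteq Y_{[J]}$ the Markov chain $U-(X,Y_\mathcal{U})-Y_{[J]}$ holds, so $I(U;Y_{[J]}\mid X,Y_\mathcal{U})=0$ and
\[
I(U;Y_{[J]})-I(U;Y_\mathcal{U})=I(U;Y_{[J]}\mid Y_\mathcal{U})\leq I(X;Y_{[J]}\mid Y_\mathcal{U}).
\]
Taking the minimum over $\mathcal{U}\subsetneq[J]$ on both sides shows the objective cannot exceed its value at $U=X$, so Theorem~\ref{thm2} delivers exactly the claimed lower bound on $C_s(R_p)$ in the limit $R_p\to+\infty$.

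For the matching converse I would reduce the setting to standard two-party secret-key generation with one-way public communication. Because $\mathbb{A}=\{[J]\}$, only the whole collection of participants must reconstruct, so I can treat them as a single virtual receiver with observation $Y_{[J]}^{1:N}$. For each fixed $\mathcal{U}\subsetneq[J]$, the constraints \eqref{eqnrel}--\eqref{eqn23} specialize to the reliability, strong-secrecy, and uniformity requirements of a two-party secret-key scheme in which the eavesdropper has side information $Y_\mathcal{U}^{1:N}$ and the public link is rate-unlimited. The Ahlswede--Csisz\'ar--Maurer upper bound then yields $R_s\leq I(X;Y_{[J]}\mid Y_\mathcal{U})$, and minimizing over $\mathcal{U}\subsetneq[J]$ gives the converse. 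The only subtlety, and hence the main thing to be careful about, is that our security is of compound type (it must hold for every $\mathcal{U}\in\mathbb{U}$ simultaneously); but since each eavesdropper bound holds independently, taking the minimum preserves the classical single-eavesdropper upper bound and closes the gap with the achievability.
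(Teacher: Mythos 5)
Your proposal is correct and follows essentially the same route as the paper: achievability by specializing Theorem~\ref{thm2} with $\mathbb{A}=\{[J]\}$, $U=X$, and the fact that $Y_{\mathcal{U}}$ is a subvector of $Y_{[J]}$ (equivalently the Markov chain $X-Y_{[J]}-Y_{\mathcal{U}}$), then the converse from the Ahlswede--Csisz\'ar/Maurer two-party bound applied to each $\mathcal{U}\subsetneq[J]$. The only difference is that you additionally verify $U=X$ is optimal within \eqref{eqnthm2}, which is a nice sanity check but redundant once the converse is invoked.
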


  \begin{proof}
 The achievability follows from Proposition~\ref{thm2}, the Markov Chains  $X- Y_{[J]}- Y_{\mathcal{U}}$, $\mathcal{U} \in \mathbb{U}$, and because $\mathbb{U} = \{ \mathcal{S} \subseteq [J] : |\mathcal{S} |< J\}$. The converse follows from \cite{ahlswede1993common,maurer1993secret}.
  \end{proof}

\begin{ex} \label{ex1}
Consider $J = 2$ participants with the access structure
$\mathbb{A} = \{ \{1,2\} \}$ and $\mathbb{U} = \{ \{1\}, \{2\} \}$,
meaning that both participants are simultaneously needed to reconstruct the secret, while each participant alone learns nothing about it. This setup models a {two-factor authentication} scenario, where both factors must be combined to recover the secret. Furthermore, assume a binary symmetric test channel model for the source such that
$Y_1 = X \oplus N_1$, $Y_2 = X \oplus N_2$, 
where $\oplus$ denotes addition modulo 2, $X$ is a Bernoulli random variable with parameter $\frac{1}{2}$, and $N_1, N_2$ are independent Bernoulli random variables with parameters $p_1, p_2,$ respectively. We illustrate Theorem \ref{thm3} and Corollary~\ref{cor1} in Fig. \ref{fignum} for $p_1 = p_2 = 0.15$ and by choosing $V = \emptyset$, $U = X \oplus N_U$ with $N_U$ a Bernoulli random variable independent of all other random variables.
\end{ex}
                 \begin{figure}
\centering
 \includegraphics[width=8.9 cm]{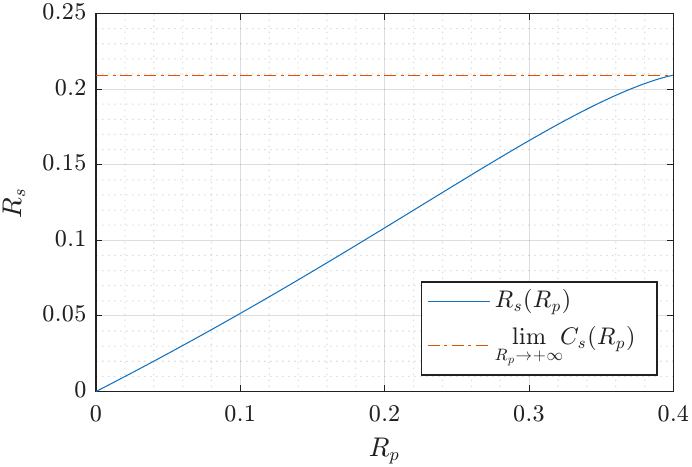}
 \caption{Illustration of the trade-off between public communication rate $R_p$ and  secret rate $R_s$, computed from Theorem~\ref{thm3} and Corollary~\ref{cor1}, for the  setup described in Example~\ref{ex1} with $J=2$, $\mathbb{A} =\{\{1,2\}\}$, and $\mathbb{U}= \{\{1\}, \{2\}\}$.
 }
    \label{fignum}
\end{figure}
\label{sec:main results}
\subsection{Application to Secret-Key Generation} \label{sec:key}
Our results yield an explicit and capacity-achieving coding scheme for secret-key generation under one-way rate-limited public communication between two parties in the presence or absence of an eavesdropper. The secret-key capacities in these settings have been established in~\cite{csiszar2000common} via non-constructive proofs and are reviewed~next.

\begin{thm}[{\cite[Th. 2.4]{csiszar2000common}}] \label{th5}
The one-way secret-key capacity under rate-limited public communication $R_p>0$ and in the absence of an eavesdropper is
\begin{align}\nonumber
   \max_{\substack{U\\   U-X-Y}}I(Y;U)   \text{ subject to } R_p \geq
   I(U;X) - I(U;Y)     .
\end{align}
\end{thm}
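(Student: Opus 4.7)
The plan is to prove Theorem \ref{th5} by specializing the explicit construction of Section \ref{seccodingscheme} to the two-party, eavesdropper-free setup, with the converse borrowed from \cite{csiszar2000common}.

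First I would set up the reduction. One-way secret-key generation with rate-limited public communication and no eavesdropper corresponds to the secret sharing instance of Section \ref{sec:problem statement} with $J = 1$ participant, access structure $\mathbb{A} = \{\{1\}\}$, and collusion family $\mathbb{U} = \{\emptyset\}$. Under this identification, (\ref{eqnrel}) becomes the reliability requirement for the unique legitimate receiver, (\ref{eqn23}) becomes the standard key-uniformity requirement, and (\ref{eqnsec}) specializes to $\lim_{N\to\infty} I(S;M) = 0$, which is precisely the independence-from-public-communication condition used in the Csisz\'ar--Narayan formulation of \cite{csiszar2000common}.

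Next, writing $Y$ for $Y_1$ and applying Theorem \ref{thm2} with these parameters, the penalty term evaluates to $\max_{\mathcal{U}\in\mathbb{U}} I(U; Y_\mathcal{U}) = I(U; Y_\emptyset) = 0$, so the construction achieves
\begin{align*}
R_s = \max_{U:\, U-X-Y} \left[I(U;Y)\right]^+ \text{ subject to } R_p \geq I(U;X|Y).
\end{align*}
The bracket $[\cdot]^+$ is vacuous since $I(U;Y)\geq 0$, and the Markov chain $U-X-Y$ gives $I(U;X|Y) = I(U;X) - I(U;Y)$, yielding exactly the rate region stated in Theorem \ref{th5}. The converse is the classical bound of \cite[Th.~2.4]{csiszar2000common}.

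The main point to verify, and likely the only subtlety, is that the scheme of Section \ref{seccodingscheme} degrades gracefully when $\mathbb{U} = \{\emptyset\}$: its universal hashing layer must still produce a nearly uniform key that is asymptotically independent of the public message $M$. I expect this to follow from applying the same leftover-hash-type estimate used in the general security analysis to the pair $(X^{1:N},M)$, so no ingredients beyond those already developed for the general secret sharing construction are required.
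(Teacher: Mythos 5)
Your specialization matches the paper's own route exactly: Theorem~\ref{th5} is stated as a citation to \cite[Th.~2.4]{csiszar2000common} (which supplies the converse), and the paper's achievability claim is precisely the one-line remark following it, namely that the coding scheme of Section~\ref{seccodingscheme} applied with $J=1$ and $\mathbb{A}=\{\{1\}\}$ (so $\mathbb{U}=\{\emptyset\}$, making the penalty term $\max_{\mathcal{U}\in\mathbb{U}}I(U;Y_{\mathcal{U}})=0$ and the rate constraint $I(U;X|Y)=I(U;X)-I(U;Y)$) achieves the stated capacity. You simply spell out the same substitution the paper performs implicitly, so there is no gap or deviation.
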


By Proposition \ref{thm2}, the coding scheme of Section \ref{seccodingscheme} achieves the secret-key capacity of Theorem  \ref{th5} with $J=1$ and $\mathbb{A}=\{ \{1\}\}$.

\begin{thm}[{\cite[Th. 2.6]{csiszar2000common}}]\label{th6}
The one-way secret-key capacity under rate-limited public communication $R_p>0$ and the presence of an eavesdropper  is
\begin{align}\nonumber
  & \max_{\substack{U,V\\   U-V-X-(Y,Z)}}[I(Y;V|U) -I(Z;V|U)]^+   \text{ subject to }\\\nonumber &  \phantom{-------}R_p \geq
    I(V;X) - I(V;Y)        .
\end{align}
\end{thm}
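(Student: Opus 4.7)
The plan is to deduce Theorem \ref{th6} by specializing Theorem \ref{thm3} to a two-participant secret-sharing instance in which one ``participant'' plays the role of the eavesdropper. I would set $J=2$, identify Participant~1 with the legitimate receiver ($Y_1=Y$) and Participant~2 with the eavesdropper ($Y_2=Z$), and pick the access and colluding structures $\mathbb{A}=\{\{1\}\}$ and $\mathbb{U}=\{\{2\}\}$. Under this identification, the Markov chain $U-V-X-Y_{[J]}$ of Theorem \ref{thm3} coincides with $U-V-X-(Y,Z)$, and the reliability, strong-secrecy, and uniformity constraints of Definition \ref{dfn3} reduce to the standard strong-secrecy secret-key generation requirements implicit in Theorem \ref{th6}.

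The remaining task is to verify that the two rate expressions coincide. Because $\mathbb{A}$ and $\mathbb{U}$ are singletons, the inner $\min$ and $\max$ in the objective of Theorem \ref{thm3} collapse, yielding $[I(V;Y|U)-I(V;Z|U)]^+$, which is exactly the objective of Theorem \ref{th6}. For the public-communication constraint, Theorem \ref{thm3} requires
\[R_p \geq I(U;X|Y) + I(V;X|UY) = I(UV;X|Y),\]
by the chain rule. The Markov chain $U-V-X$ gives $I(U;X|V)=0$, so $I(UV;X|Y)=I(V;X|Y)$. Invoking $V-X-Y$ (a consequence of $V-X-(Y,Z)$), this further simplifies to $H(V|Y)-H(V|X)=I(V;X)-I(V;Y)$, matching the constraint of Theorem \ref{th6}.

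The converse direction is already established in \cite[Th.~2.6]{csiszar2000common}, so the capacity claim follows once achievability is in hand. I expect no substantive obstacle in the argument: the only step beyond pure bookkeeping is the algebraic reduction of the public-communication constraint, which simply chains two Markov properties. If any subtlety arises, it will be in confirming that the secrecy and uniformity conditions of Definition \ref{dfn3}, after specialization to $\mathbb{U}=\{\{2\}\}$, deliver precisely the joint $I(S;MZ^{1:N})\to 0$ and $\log|\mathcal{S}|-H(S)\to 0$ guarantees required of a strong secret key, but this is essentially a relabeling.
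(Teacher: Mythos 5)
Your overall route is the same as the paper's: instantiate Theorem~\ref{thm3} with $J=2$, treat Participant~$2$ as the eavesdropper ($Y_2=Z$), and simplify the rate and public-communication expressions via the Markov structure. The algebraic reduction you give—chain rule $I(U;X|Y)+I(V;X|UY)=I(UV;X|Y)$, then $I(U;X|VY)=0$ from $U-V-(X,Y,Z)$, then $I(V;X|Y)=I(V;X)-I(V;Y)$ from $V-X-Y$—is exactly the right computation.

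There is, however, a small but genuine flaw in the instantiation: you set $\mathbb{A}=\{\{1\}\}$ with $J=2$, but the problem statement requires $\mathbb{A}$ to be a \emph{monotone} access structure, and $\{1\}\subseteq\{1,2\}$ forces $\{1,2\}\in\mathbb{A}$. The paper accordingly takes $\mathbb{A}=\{\{1\},\{1,2\}\}$ (with $\mathbb{U}=\{\{2\}\}$). The good news is that this does not change the result: under $U-V-X-(Y,Z)$, conditioning on $X$ makes $(U,V)$ independent of $(Y,Z)$, so $H(U|XY)=H(U|XYZ)$ and $H(V|UXY)=H(V|UXYZ)$; hence $I(U;X|YZ)\le I(U;X|Y)$ and $I(V;X|UYZ)\le I(V;X|UY)$, and also $I(V;YZ|U)\ge I(V;Y|U)$. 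Thus all the $\max_{\mathcal{A}\in\mathbb{A}}$ (resp.\ $\min_{\mathcal{A}\in\mathbb{A}}$) in Theorem~\ref{thm3} are still attained at $\mathcal{A}=\{1\}$, and your derivation goes through verbatim once $\{1,2\}$ is added to $\mathbb{A}$. You should state this extra step explicitly, since otherwise Theorem~\ref{thm3} is formally not applicable to the instance you chose.
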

By Theorem \ref{thm3}, the coding scheme of Section \ref{seccodingscheme} achieves the secret-key capacity of Theorem  \ref{th6} with $J=2$, $\mathbb{A}=\{ \{1\}\}$, and $\mathbb{U}=\{ \{2\}\}$.

Note that \cite{chou2013polar} also provides a capacity-achieving and constructive scheme for secret-key generation under rate-limited communication. However,  \cite{chou2013polar} does not show the achievability of the capacity in Theorem \ref{th6} and, unlike this paper,  requires a pre-shared secret with negligible rate to ensure strong secrecy.
       \section{Auxiliary result} \label{sec:auxiliary result}
We provide an auxiliary result to construct an explicit coding scheme for the  problem described in Section~\ref{sec:problem statement}. Specifically, we consider the setting of Section \ref{sec:problem statement} with the following modifications. 
Instead of considering the constraints \eqref{eqnrel}-\eqref{eqn23}, the dealer creates a quantized version $\widetilde U^{1:N}$ of $X^{1:N}$, with the requirements that (i)  $\widetilde U^{1:N}$ follows a pre-determined product distribution, and (ii) any subsets of participants in the access structure can reconstruct $\widetilde U^{1:N}$. Next, we formalize this problem statement and construct an explicit coding scheme for this setting.  We will then use this auxiliary result to prove Proposition~$\ref{thm2}$ and Theorem~$\ref{thm3}$ in Sections~\ref{sec:proof of theorem 1} and \ref{proof of theorem 2}, respectively.

       Consider a discrete memoryless source  $((\mathcal X\times \mathcal Y_{\mathcal{A}} )_{\mathcal{A}\in \mathbb{A}},(p_{XY_{\mathcal{A}}})_{\mathcal{A}\in \mathbb{A}}) $ with $\mathcal{X} \triangleq \{ 0,1\}$. Define the joint probability distribution $p_{X_{}^{}U_{}^{}Y^{}_{{\mathcal{A}}}}\triangleq p_{X^{}{Y^{}_{{\mathcal{A}}}}}p_{U^{}|X^{}}$ over $\mathcal X\times \mathcal U\times \mathcal Y_{\mathcal{A}}$, ${\mathcal{A}} \in \mathbb{A}$, with $\mathcal{U} \triangleq \{ 0,1\}$. For any ${\mathcal{A}} \in  \mathbb{A}$, let $(X_{}^{1:N}, U_{}^{1:N}, Y_{{\mathcal{A}}}^{1:N})$ be distributed according to the product distribution $p_{X_{}^{1:N}U_{}^{1:N}Y^{1:N}_{{\mathcal{A}}}}\triangleq\prod_{i=1}^N p_{X_{}U{Y_{{\mathcal{A}}}}}$. 
              \begin{definition}
              A $(2^{NR_M},N)$ code consists of
              \begin{itemize}
                  \item An alphabet $\mathcal{M} \triangleq [ 2^{N R_M}]$; 
                  \item An encoding function $f : \mathcal X^{N} \rightarrow \mathcal{U}^{N}\times \mathcal{M}$; 
                      \item  $|\mathbb{A}|$ 
                      decoding functions $g_{\mathcal{A}} : \mathcal M \times \mathcal Y_{\mathcal{A}}^{N} \rightarrow \mathcal U^{N}  $,  $\mathcal{A}\in \mathbb{A}$;
                      \end{itemize}
                      and operates as follows:
                      \begin{enumerate}
                      \item  The dealer observes
                      $ X^{1:N}$ and  Participant $j \in [J]$ observes $Y_j^{1:N}$;
                         \item              The dealer encodes $X_{}^{1:N}$ with $f$ to form a vector quantized version $\widetilde U_{}^{1:N}$ of $X_{}^{1:N}$ and a message $M$, where $(\widetilde U_{}^{1:N},M) \triangleq f(X^{1:N})$.  Let  the  probability distribution of $(\widetilde U^{1:N}, X^{1:N})$ be denoted by $\widetilde  p_{U^{1:N}_{}X^{1:N}_{}}$;
                      \item   The dealer transmits $M$ to the participants over the public channel; 
    
                                         \item Any subset of  participants ${\mathcal{A}}\in \mathbb{A}$ can create  $\hat{U}^{1:N}_{\mathcal{A}}\triangleq g_{\mathcal{A}}(M, Y_{{\mathcal{A}}}^{1:N})$, an estimate of $\widetilde U_{}^{1:N}$.
                    \end{enumerate}    \label{dfn1}         
                                 \end{definition}
      In the following, we assume that the participants and dealer observe $k$ blocks of lengths $N$, i.e., 
      $Y^{1:N}_{{\mathcal{A}},1:k} \triangleq (Y^{1:N}_{{\mathcal{A}},i})_{i \in [k]}$ is the source observation  at Decoder ${\mathcal{A}}\in \mathbb{A}$ and  $X^{1:N}_{1:k} \triangleq (X^{1:N}_{i})_{i \in [k] }$ is the source observation at the dealer  with  $p_{X^{1:N}_{1:k}Y^{1:N}_{{\mathcal{A}},1:k}} \triangleq  \prod_{i=1}^{kN} p_{XY_{\mathcal{A}}}$, ${\mathcal{A}}\in\mathbb{A}$.
                 
                \begin{thm}   For any $\epsilon, \delta >0$, there exist $k,n_0 \in \mathbb{N}$ such that for any $n\geq n_0$ and $N \triangleq 2^n$,          there exists an encoder
     \begin{align}
         &f      : \mathcal{X}^{kN}_{}\to \mathcal{U}^{kN}_{}\times \mathcal{M}, \label{encoder}
     \end{align}
          and $|\mathbb{A}|$ decoders
      \begin{align*}
       g_{\mathcal{A}}
       : \mathcal M 
       \times \mathcal{Y}_{{\mathcal{A}}}^{kN} \to \mathcal{U}^{kN}, \forall {\mathcal{A}}\in \mathbb{A}, 
     \end{align*}
such that 
    the message rate $R_{{M}}$ satisfies
      \begin{align}
        R_{{M}} = \displaystyle\max_{ {\mathcal{A}} \in \mathbb{A}}I(U;X|Y_{\mathcal{A}})+ \delta,
        \label{eqn2}
     \end{align} and the probability of error at Decoder ${\mathcal{A}}\in \mathbb{A}$ satisfies
      \begin{align}
       \mathbb{P} [  g_{\mathcal{A}}
      (M, {Y}^{1:N}_{ \mathcal{A},{1:k}})\neq\widetilde{U}^{1:N}_{1:k}]<\epsilon, \forall  {\mathcal{A}}\in \mathbb{A},\label{eqn3}
     \end{align}
where       $(\widetilde{U}^{1:N}_{1:k}, M) \triangleq f (X^{1:N}_{1:k})$.  Furthermore, the  encoding and decoding complexities are 
${O}(kN \log N)$, since the scheme operates over $k$ blocks of length $N$, 
each of which inherits the ${O}(N \log N)$ complexity of 
source polarization~\cite{arikan2010source}.

 Additionally, we have the following two properties.
\begin{itemize}
    \item

 The probability distribution induced by the encoding scheme $\widetilde p_{U^{1:N}_{i}X^{1:N}_{i}}$,  $i\in [k],$  satisfies   
  \begin{align}
     \mathbb{V} \left( \widetilde{p}_{U_{i}^{1:N}X_{i}^{1:N}} , p_{U^{1:N}_{}X^{1:N}_{}}   \right) \leq \delta_N^{(1)}  ,\label{eqn4}     \end{align}

 where ${p}_{U_{}^{1:N}X_{}^{1:N}} \triangleq  \prod_{i=1}^{N} p_{UX}$, $
 \delta_N \triangleq 2^{-N^{\beta}},
 \beta \in]0,\textstyle\frac{1}{2}[$, and $\delta_N^{(1)} \triangleq \sqrt{2 \ln 2} \sqrt{N\delta_N}$.   

\item For any $\mathcal{U} \in \mathbb{U}$,
    \begin{align}
     &   H(\widetilde U^{1:N}_{1:k} | Y^{1:N}_{\mathcal{U},1:k}) \nonumber \\
     & \phantom{--}\geq Nk (H(U|Y_{\mathcal{U}}) -  H(U|X)) - o(Nk). \label{eqlemma2}
    \end{align}
 \end{itemize}
     \label{thm1}
     \end{thm}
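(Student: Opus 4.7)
The plan is to build a polar-code scheme that simultaneously (i) quantizes $X^{1:N}_{1:k}$ into $\widetilde U^{1:N}_{1:k}$ close to a product distribution, (ii) produces a public message of rate at most $\max_{\mathcal A\in \mathbb A}I(U;X|Y_{\mathcal A})+\delta$ that every $\mathcal A\in\mathbb A$ can combine with $Y^{1:N}_{\mathcal A,1:k}$ to recover $\widetilde U^{1:N}_{1:k}$, and (iii) leaves enough conditional entropy of $\widetilde U^{1:N}_{1:k}$ given $Y^{1:N}_{\mathcal U,1:k}$ for every $\mathcal U\in\mathbb U$.

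Let $G_N=G_2^{\otimes n}$ be the Arikan polar transform with $N=2^n$, set $V^{1:N}\triangleq U^{1:N}G_N$, fix $\beta\in(0,1/2)$, $\delta_N\triangleq 2^{-N^\beta}$, and for each $Z\in\{X\}\cup\{Y_{\mathcal A}\}_{\mathcal A\in\mathbb A}\cup\{Y_{\mathcal U}\}_{\mathcal U\in\mathbb U}$ define
\begin{align*}
\mathcal H_{U|Z}&\triangleq\{i\in[N]:H(V^i|V^{1:i-1}Z^{1:N})\geq 1-\delta_N\},\\
\mathcal L_{U|Z}&\triangleq\{i\in[N]:H(V^i|V^{1:i-1}Z^{1:N})\leq \delta_N\}.
\end{align*}
By strong polarization $|\mathcal H_{U|Z}|/N\to H(U|Z)$, and the Markov chains $U-X-Y_{\mathcal A}$ and $U-X-Y_{\mathcal U}$ yield $\mathcal H_{U|X}\subseteq \mathcal H_{U|Y_{\mathcal A}}\cap \mathcal H_{U|Y_{\mathcal U}}$. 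Per block, the dealer generates $\widetilde V^{1:N}$ by drawing $\widetilde V^i$ uniformly at random for $i\in\mathcal H_{U|X}$ and sampling $\widetilde V^i$ from $p_{V^i|\widetilde V^{1:i-1}X^{1:N}}$ otherwise, and sets $\widetilde U^{1:N}\triangleq\widetilde V^{1:N}G_N^{-1}$. A chain-rule-of-divergence argument bounds $\mathbb D(p_{U^{1:N}X^{1:N}}\lVert\widetilde p_{U^{1:N}X^{1:N}})\leq N\delta_N$, giving \eqref{eqn4} via Pinsker's inequality.

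For transmission, a single-decoder Wyner--Ziv-type polar scheme lets $\mathcal A$ recover $\widetilde V^{1:N}$ from $Y^{1:N}_{\mathcal A}$ together with $\widetilde V^{1:N}[\mathcal L_{U|X}\setminus\mathcal L_{U|Y_{\mathcal A}}]$, at rate $\simeq I(U;X|Y_{\mathcal A})$, provided the randomness on $\mathcal H_{U|X}$ is shared. To handle the compound set $\mathbb A$, we employ a block-Markov chaining in the spirit of \cite{ye2014universal}, adapted to lossy coding: bits decoded by every $\mathcal A\in\mathbb A$ in earlier blocks serve as shared randomness on $\mathcal H_{U|X}$ for later blocks, and the transmitted indices are arranged so that the amortized per-symbol rate equals $\max_{\mathcal A\in\mathbb A}I(U;X|Y_{\mathcal A})+\delta$, with an initial $O(1)$-size seed absorbed into $\delta$ by taking $k$ large. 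Each decoder runs successive cancellation on $(M,Y^{1:N}_{\mathcal A,1:k})$; by a hybrid coupling of $\widetilde p$ to the target $p$ using the variational bound above and the standard polar-SC error estimate, the per-block error is $O(N\delta_N)$, so a union bound over blocks and decoders yields \eqref{eqn3}.

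For \eqref{eqlemma2}, fix $\mathcal U\in\mathbb U$ and, per block $b$, expand
\begin{align*}
H(\widetilde U_b^{1:N}|Y^{1:N}_{\mathcal U,b})=\sum_{i=1}^N H_{\widetilde p}(\widetilde V_b^i|\widetilde V_b^{1:i-1}Y^{1:N}_{\mathcal U,b}).
\end{align*}
Restricting the sum to $i\in\mathcal H_{U|Y_{\mathcal U}}\setminus\mathcal H_{U|X}$, whose cardinality is $\simeq N(H(U|Y_{\mathcal U})-H(U|X))$, and noting that on each such index the target conditional entropy is $\geq 1-\delta_N$ and transfers to $\widetilde p$ with an $o(1)$ loss by continuity of entropy under total variation, yields the single-block version of \eqref{eqlemma2}. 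Because these restricted indices lie in $\mathcal H_{U|X}^c$, the corresponding bits are functions of the per-block source $X_b^{1:N}$ alone and are approximately independent across blocks; summing over $b\in[k]$ and absorbing the chaining-induced corrections on $\mathcal H_{U|X}$ as $o(Nk)$ then gives \eqref{eqlemma2}. The main obstacle is the transmission paragraph above: designing a chaining that simultaneously serves every decoder in $\mathbb A$ at the amortized rate $\max_{\mathcal A\in\mathbb A}I(U;X|Y_{\mathcal A})$, while preserving the single-block distribution approximation so that the entropy lower bound still holds uniformly over $\mathcal U\in\mathbb U$.
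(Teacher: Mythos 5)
Your roadmap matches the paper in spirit (polar-code quantization, shared randomness on the "very-high-entropy-given-$X$" indices, block-Markov chaining for the compound decoders, universal-hashing-friendly entropy lower bound), but two of the three technical ingredients are either missing or have a genuine flaw.

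\textbf{Reliability and rate.} You explicitly flag the chaining as "the main obstacle" and do not supply it. The paper constructs it by induction on $|\mathbb{A}|$: given a code $f^{1:k_1N}_{[L]}$ serving decoders $1,\dots,L$ and a single-decoder code $f^{1:k_1N}_{\{L+1\}}$ for decoder $L+1$, it chains over $k_2$ superblocks with the XOR'd message $M_{[L+1]}=\bigl[M_{[L],1},(M_{[L],i+1}\oplus M_{\{L+1\},i})_{i=1}^{k_2-1},M_{\{L+1\},k_2}\bigr]$; decoders $1,\dots,L$ decode blocks forwards and decoder $L+1$ decodes backwards, each cancelling the other family's message from the XOR once they have decoded the adjacent block. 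This is what makes the overall rate a $\max$ rather than a sum over $\mathcal{A}\in\mathbb{A}$ while keeping the per-block encoder (and hence \eqref{eqn4}) unchanged. Your sketch ("bits decoded by every $\mathcal A$ in earlier blocks serve as shared randomness on $\mathcal H_{U|X}$ for later blocks") is not the same scheme and, as written, does not obviously give the max-rate amortization across an arbitrary access structure.

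\textbf{Entropy lower bound \eqref{eqlemma2}.} Your claim that the bits indexed by $\mathcal H_{U|Y_{\mathcal U}}\setminus\mathcal H_{U|X}$ "are functions of the per-block source $X_b^{1:N}$ alone and are approximately independent across blocks" is false: for $i\notin\mathcal H_{U|X}$, the bit $\widetilde V_b^i$ is drawn from $p_{V^i|V^{1:i-1}X^{1:N}}(\cdot\,|\,\widetilde V_b^{1:i-1},X_b^{1:N})$, and $\widetilde V_b^{1:i-1}$ contains the $R_1$-filled positions in $\mathcal H_{U|X}\cap[1:i-1]$, which are reused in every block. So there is a $\Theta(N)$-bit common random cause across blocks, and subadditivity of conditional entropy runs in the wrong direction for your "sum over $b$" step. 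The paper resolves this by first conditioning on $R_1$ (a one-time $O(N)$ loss, hence $o(kN)$), using the Markov chain $\widetilde U_i^{1:N}-(Y^{1:N}_{\mathcal U,1:k},R_1)-\widetilde U_{1:i-1}^{1:N}$ to decouple the blocks, and then, within each block, introducing an auxiliary $\breve U_i^{1:N}$ built with fresh randomness on $\mathcal V_{U|X}$ and a deterministic round on $\mathcal H_U^c$, so that $H(\breve U_i|Y_{\mathcal U,i}\widetilde U_iR_1)$ can be bounded by $|\mathcal V_{U|X}|+o(N)$. Relatedly, the paper's encoder also places fresh per-block uniform bits $\bar R_i$ on $\mathcal V_U\setminus\mathcal V_{U|X}$ (your construction samples them), and this is precisely what lets $\breve V_i$ and $\widetilde V_i$ agree on that set in step $(f)$ of the bound; your variant would need a different argument there.

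Your proof of \eqref{eqn4} via the chain rule for divergence is correct and essentially identical to the paper's.
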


Establishing Proposition~\ref{thm2} requires combining the result of Theorem~\ref{thm1} with universal hashing. To this end, the property in~\eqref{eqn4} enables the analysis of the induced distribution within each coding block, showing that it is close to a product distribution. However, a key challenge is that the joint distribution across the $k$ encoding blocks is not a product distribution, in particular, $(\widetilde{U}^{1:N}_{1:k}, Y^{1:N}_{\mathcal{U},1:k})$ are not independent and identically distributed random variables. Despite this complication, the property in~\eqref{eqlemma2} provides a lower bound on the conditional entropy $H(\widetilde{U}^{1:N}_{1:k} \mid Y^{1:N}_{\mathcal{U},1:k})$, which will play a crucial role in the analysis of universal hashing.

 Note that the property in \eqref{eqn4} generalizes the lossless source coding problem where the encoder output distribution must approximate a target distribution~\cite{chou2013data,chou2016coding,chou2021universal}, to the case of lossy source coding with compound side information.

\section{Proof of Theorem \ref{thm1}} \label{sec:proof of theorem 3}

We begin by outlining the high-level idea for the case $\mathbb{A} = \{\mathcal{A}_1,\mathcal{A}_2 \}$, which will later be extended by induction over the cardinality of $\mathbb{A}$.

\emph{Encoding}. We operate over $k$ blocks, each of length $N$. 
For each block $j \in \llbracket 1, k \rrbracket$, two encoders $f_1$ and $f_2$ encode the source sequence $X_j^{1:N}$ as 
$$
f_i(X_j^{1:N}) = (\widetilde{U}_{j}^{1:N}, M_{i,j}), \quad i \in \{1,2\},
$$
 such that $\widetilde{U}_{j}^{1:N}$ can be reconstructed from the side information $Y_{\mathcal{A}_i,j}^{1:N}$ and $M_{i,j}$. The transmitted messages are
\begin{align}
[\, M_{1,1}, \, (M_{1,i+1} \oplus M_{2,i})_{i=1}^{k-1}, \, M_{2,k} \,]. \label{eqchaining}
\end{align}
Then, each decoder can successively reconstruct the entire sequence  
$\widetilde{U}_{1:k}^{1:N}$ using only its own side information and the messages defined in \eqref{eqchaining}. Specifically, 
Decoder~1, with $Y_{\mathcal{A}_1,1:k}^{1:N}$, begins decoding from the first block and proceeds sequentially toward the last,
while Decoder~2, with $Y_{\mathcal{A}_2,1:k}^{1:N}$, starts from the last block and proceeds in reverse order toward the first, as described next.

\emph{Decoding for $\mathcal{A}_1$}. Suppose a receiver corresponding to $\mathcal{A}_1$, has access to side information 
$Y_{\mathcal{A}_1,1:k}^{1:N}$. 
Then, decoding proceeds   as follows:  Using $(M_{1,1}, Y_{\mathcal{A}_1,1}^{1:N})$, 
    the decoder reconstructs $\widetilde{U}_{1}^{1:N}$,
    and from it computes $M_{2,1}$.
Then, for $i = 1, \ldots, k-1$, use the previously recovered $M_{2,i}$ to obtain
    $$
    M_{1,i+1} = (M_{1,i+1} \oplus M_{2,i}) \oplus M_{2,i}.
    $$
    Then, using $(M_{1,i+1}, Y_{\mathcal{A}_1,i+1}^{1:N})$, 
    reconstruct $\widetilde{U}_{i+1}^{1:N}$ and from it computes $M_{2,i+1}$.
 
\emph{Decoding for $\mathcal{A}_2$}.
A receiver corresponding to $\mathcal{A}_2$, 
with side information $Y_{\mathcal{A}_2,1:k}^{1:N}$, proceeds backward: 
    Using $(M_{2,k}, Y_{\mathcal{A}_2,k}^{1:N})$, reconstruct $\widetilde{U}_{k}^{1:N}$ and compute $M_{1,k}$.
    Then, for $i = k-1, \ldots, 1$, recover
    $$
    M_{2,i} = (M_{1,i+1} \oplus M_{2,i}) \oplus M_{1,i+1},
    $$
    and decode $\widetilde{U}_{i}^{1:N}$ using 
    $(M_{2,i}, Y_{\mathcal{A}_2,i}^{1:N})$.

 \subsection{
     Notation} \label{secnotation}
Let $N \triangleq 2^n$, $n \in \mathbb N$ , and 
$G_n \triangleq  \left[ \begin{smallmatrix}
       1 & 0            \\[0.3em]
       1 & 1 
     \end{smallmatrix} \right]^{\otimes n} $ be the source polarization matrix defined in \cite{arikan2010source}.    
      Define 
     $V_{}^{1:N} \triangleq U_{}^{1:N} G_n$. 
     For  $\delta_N \triangleq 2^{-N^{\beta}}$, $\beta \in]0,\frac{1}{2}[$, define  for any $\mathcal{A}\in \mathbb{A}$,
\begin{align}
    \mathcal{V}_U &\triangleq \left\{ i \in [N]: H( V^i | V^{1:i-1})\geq  1 - \delta_N \right\},\label{eqnvU}\\
      \mathcal{H}_U &\triangleq \left\{ i \in [N]: H( V^i | V^{1:i-1}) \geq   \delta_N \right\},\label{eqnhu}\\
          \mathcal{V}_{U|X} &\triangleq \left\{ i \in [N]: H( V^i | V^{1:i-1}X^{1:N}) \geq  1-\delta_N \right\},\nonumber
\\
\mathcal{H}_{U|Y_{\mathcal{A}}} &\triangleq \left\{ i \in [N]: H( V^i | V^{1:i-1}Y_{\mathcal{A}}^{1:N}) \geq  \delta_N \right\}.\label{eqnhuya}
\end{align}
         In the coding scheme analysis, we will use Lemma \ref{lem1}. Note that Lemma \ref{lem1} also gives an interpretation for the sets defined in $(\ref{eqnhu})$,  $(\ref{eqnhuya})$ in terms of source coding with side information. 
  \begin{lem}(Source coding with side information\cite{arikan2010source}). Consider a probability distribution $p_{UY}$ over $\mathcal{U}\times \mathcal{Y}$ with $\mathcal{U} \triangleq \{ 0,1\}$ and $\mathcal{Y}$ a finite alphabet. 
       Consider $(U^{1:N}, Y^{1:N} )$ distributed according to $\prod_{i=1}^N p_{UY}$. Define $A^{1:N}\triangleq U^{1:N}G_n$ and for $\delta_N\triangleq 2^{-N^\beta}$ with $\beta \in]0,\frac{1}{2}[$, the set $  \mathcal{H}_{U|Y} \triangleq \left\{ i \in [N]        : H( A^i | A^{1:i-1}Y^{1:N}) > \delta_N \right\}$. Given $A^{1:N}[ \mathcal{H}_{U|Y}]$ and $Y^{1:N}$, one can form $\hat A^{1:N}$ by the successive cancellation decoder of \cite{arikan2010source} such that $\mathbb{P}[\hat A^{1:N}\neq A^{1:N}]\leq N\delta_N$. Moreover, $\displaystyle \lim_{N \to \infty}|\mathcal{H}_{U|Y}|/N=H(U|Y)$.\label{lem1}
     \end{lem}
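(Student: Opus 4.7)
The plan is to derive the lemma from Arıkan's source polarization theorem combined with its fast-polarization refinement, both applied to the process $A^{1:N}=U^{1:N}G_n$ with side information $Y^{1:N}$.

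First I would invoke the polarization of the conditional entropies $H_i\triangleq H(A^i\mid A^{1:i-1},Y^{1:N})$ established in \cite{arikan2010source}: asymptotically, the fraction of indices with $H_i$ bounded away from $0$ and $1$ vanishes, and the fast polarization refinement sharpens this to $|\{i:\delta_N<H_i<1-\delta_N\}|/N\to 0$ for $\delta_N=2^{-N^\beta}$, $\beta\in(0,1/2)$. The cardinality statement then follows from the chain rule $\sum_{i=1}^N H_i = H(A^{1:N}\mid Y^{1:N})=H(U^{1:N}\mid Y^{1:N})=NH(U|Y)$, where the first equality uses invertibility of $G_n$ and the second uses the i.i.d.\ structure of $(U,Y)$: because the $H_i$'s concentrate near $\{0,1\}$, the fraction sitting near $1$ must tend to $H(U|Y)$, so $|\mathcal{H}_{U|Y}|/N\to H(U|Y)$.

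Next I would describe the successive cancellation decoder, which processes $i=1,\dots,N$ in order: it sets $\hat{A}^i:=A^i$ when $i\in\mathcal{H}_{U|Y}$ (using the values provided as input), and otherwise sets $\hat{A}^i:=\arg\max_{a\in\{0,1\}}\mathbb{P}[A^i=a\mid A^{1:i-1}=\hat{A}^{1:i-1},Y^{1:N}]$. The required conditional likelihoods can be computed recursively in $O(N\log N)$ total operations using the butterfly structure of $G_n$. For the error analysis, I would pass through the Bhattacharyya parameter $Z_i\triangleq Z(A^i\mid A^{1:i-1},Y^{1:N})$: the MAP error at step $i$ (conditional on no previous error) is bounded by $Z_i$, and the standard binary-input inequality between $H$ and $Z$ converts the entropy bound $H_i\leq\delta_N$ for $i\notin\mathcal{H}_{U|Y}$ into a bound of the form $Z_i=O(\sqrt{\delta_N})$. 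A union bound over the at most $N$ such indices then yields $\mathbb{P}[\hat{A}^{1:N}\neq A^{1:N}]\leq N\delta_N$, up to replacing $\beta$ by a slightly smaller value in $(0,1/2)$ to absorb the square root in the conversion.

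The main obstacle is reconciling the entropy-based definition of $\mathcal{H}_{U|Y}$ used in the lemma statement with the Bhattacharyya-based error bound of the successive cancellation decoder; this reconciliation is routine via the standard polarization inequalities between $H$ and $Z$, after which the proof reduces to a direct instance of the source-coding-with-side-information construction in \cite{arikan2010source}.
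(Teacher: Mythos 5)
The paper does not actually prove Lemma~\ref{lem1}; it imports it as a known result from Ar\i kan's source polarization paper \cite{arikan2010source}, so there is no internal proof with which to compare. Your reconstruction follows the standard argument from that reference and is essentially correct: the chain rule $\sum_i H(A^i\mid A^{1:i-1}Y^{1:N}) = H(U^{1:N}\mid Y^{1:N}) = NH(U|Y)$ together with the fast-polarization statement (the fraction of indices with $\delta_N < H_i < 1-\delta_N$ vanishes) gives $|\mathcal{H}_{U|Y}|/N\to H(U|Y)$, and the successive cancellation decoder with a union bound over the non-revealed indices gives the error bound.

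You correctly flag the single nontrivial wrinkle: the set $\mathcal{H}_{U|Y}$ in the statement is defined via conditional entropies, but the per-index SC error bound is naturally in terms of the Bhattacharyya parameter $Z_i = Z(A^i\mid A^{1:i-1}Y^{1:N})$. The conversion $H_i \le \delta_N \Rightarrow Z_i \le \sqrt{2\delta_N}$ (or a comparable standard inequality) introduces a square root, so the union bound literally gives something like $N\sqrt{2\delta_N}$ rather than $N\delta_N$; since $\delta_N = 2^{-N^\beta}$, absorbing this by shrinking $\beta$ slightly within $]0,\frac{1}{2}[$ is the usual remedy and is exactly what you propose. This is the same slight looseness present whenever polar-coding lemmas are stated with entropy-defined index sets but proved with Bhattacharyya error bounds, and it is harmless for the way the lemma is used in the paper. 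Your proposal is a faithful and correctly reasoned reconstruction.
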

We write the access structure as $\mathbb{A} = \{ \mathcal{A}_{j} \subseteq [J] : j \in [|\mathbb{A}| ] \}$.      Because our coding scheme operates over $k$ blocks of $N$ source observations $(X_{1:k}^{1:N}, Y_{\mathcal{A}_j,1:k}^{1:N})$,  to simplify notation, we define $B_{j} \triangleq Y_{\mathcal{A}_j}$ and $B_{j,1:k}^{1:N} \triangleq Y_{\mathcal{A}_j,1:k}^{1:N}$. 
    This notation is also convenient as the analysis of our coding scheme is done by induction over the number of decoders, i.e., \textcolor{black}{the cardinality of $\mathbb{A}$}.
Then, for a finite set  of integers~$\mathcal{S} \subseteq [|\mathbb{A}|]$, consider the following notation for a set of $|\mathcal{S}|$ decoders $(g_{\mathcal{S},j}^{1:N})_{j\in\mathcal{S}}$. The superscript indicates the length of the output, the first subscript $\mathcal{S}$ indicates that all the decoders are indexed by $\mathcal{S}$, the second subscript $j\in\mathcal{S}$ indicates that $g_{\mathcal{S},j}^{1:N}$ is the decoding function for the decoder indexed by $j \in \mathcal S$, i.e., the decoder that has access to the source observations $B_{j,1:k}^{1:N}$. 
Consider also the following notation for an encoder $f_{\mathcal{S}}^{1:N}$. The superscript indicates the length of the input, and the subscript $\mathcal{S}$ indicates that $f_{\mathcal{S}}^{1:N}$ is associated with $|\mathcal{S}|$ decoders that are indexed by $\mathcal{S}$.
    \subsection{Preliminary Results}
Algorithm $\ref{algm1}$ describes the  construction of $\widetilde U^{1:N}_{1:k}$ from the random variables $X^{1:N}_{1:k}$ and $R_{1}$, where $R_{1}$ is a vector of $|\mathcal{V}_{U|X}|$ uniformly distributed bits.
     \begin{algorithm}
  \caption{Construction of $\widetilde U^{1:N}_{1:k}$ 
  }
  \label{alg:encoding_6}
  \begin{algorithmic}   [1] 
    \REQUIRE Source observations $(X_{i}^{1:N})_{i \in [k]}$, where  $X_{i}^{1:N}$ is of length $N$ and corresponds to Block~$i\in [k]$; a sequence $R_{1}$ of $|\mathcal{V}_{U|X}|$ uniformly distributed bits and $k$ sequences $(\bar{R}_i)_{i \in [k]}$ of $|\mathcal{V}_{U} \backslash \mathcal{V}_{U\vert X}|$ uniformly distributed bits.
            \FOR{Block $i=1$ to $k$}
    \STATE $\widetilde V_{i}^{1:N}[\mathcal{V}_{U\vert X}]\leftarrow R_{1}$
        \STATE $\widetilde V_{i}^{1:N}[\mathcal{V}_{U} \backslash \mathcal{V}_{U\vert X}]\leftarrow \bar R_{i}$
       \STATE Given $X_{i}^{1:N}$, successively draw the remaining bits of $\widetilde V_{i}^{1:N}$ according to\\$\widetilde p_{V_{i}^{1:N}X_{i}^{1:N}}\triangleq  \prod_{j=1}^N \widetilde p_{V_{i}^j\vert V_{i}^{1:j-1}X_{i}^{1:N} }p_{X^{1:N}} $   
       with
  \begin{align}  
\nonumber & \widetilde p_{V_{i}^j\vert V_{i}^{1:j-1} X_{i}^{1:N}}(v_{i}^j\vert \widetilde V_{i}^{1:j-1}X_{i}^{1:N})
\\ & \triangleq p_{V^j\vert V^{1:j-1} }(v_{i}^j\lvert \widetilde V_{i}^{1:j-1}X_{i}^{1:N})  
   \text{ if } j \!\in\! \mathcal{V}_{U}^c.
        \label{eqnini}
\end{align}
 \STATE Construct $\widetilde U_{i}^{1:N} \triangleq \widetilde V_{i}^{1:N} G_n$
  \ENDFOR
   \RETURN $\widetilde U^{1:N}_{1:k} \triangleq (\widetilde U^{1:N}_{i})_{i \in [k]}$ 
           \end{algorithmic}\label{algm1}
\end{algorithm}
We now prove that \eqref{eqn4} and~\eqref{eqlemma2} hold.

\subsubsection{Proof of \eqref{eqn4}}
 We show that  the product distribution ${p}_{U_{}^{1:N}X_{}^{1:N}}$
 is well approximated in each Block $i\in [k]$. 
Specifically, for  $i\in [k] $, we~have
         \begin{align}\nonumber
& \mathbb{D}(p_{X^{1:N}U^{1:N}}\lVert\widetilde p_{X_{i}^{1:N}U_{i}^{1:N}})\\\nonumber
&\stackrel{(a)}=\mathbb{D}(p_{X^{1:N}V^{1:N}}\lVert\widetilde p_{X_{i}^{1:N}V_{i}^{1:N}})\\\nonumber
&\stackrel{(b)}=\mathbb{D}(p_{V^{1:N}|X^{1:N}}\lVert\widetilde p_{V_{i}^{1:N}|X_{i}^{1:N}}) 
\\\nonumber
&\stackrel{(c)}=\sum_{j=1}^N\mathbb{D}(p_{V^{j}|V^{1:j-1}X^{1:N}}\lVert\widetilde p_{V_{i}^{j}|V_{i}^{1:j-1}X_{i}^{1:N}}) 
\\\nonumber
  &\stackrel{(d)}=\sum_{j \in \mathcal{V}_{U}}(1- H({V^j \vert V^{1:j-1}}X^{1:N})) \displaybreak[0]\\\nonumber
 & \stackrel{(e)}\leq |\mathcal{V}_{U}|\delta_N
 \\
 & \leq N\delta_N, \label{eqn5}
 \end{align} where $(a)$ holds by the invertibility of $G_n$, $(b)$ and $(c)$ hold by the chain rule for relative entropy, $(d)$ holds by (\ref{eqnini}) and by Line~$3$ of Algorithm~\ref{algm1}, $(e)$ holds by  $(\ref{eqnvU})$. Finally, we deduce \eqref{eqn4} from Pinkser's inequality and \eqref{eqn5}.
 \subsubsection{Proof of \eqref{eqlemma2}}
To prove \eqref{eqlemma2}, we consider the construction of the following auxiliary random variables $\breve U^{1:N}_{1:k}$  in Algorithm~\ref{alg:encoding_6b}. Note that the difference in the construction of $\breve U^{1:N}_{1:k}$ and $\widetilde U^{1:N}_{1:k}$ is that the randomness in Line 3 of Algorithm~\ref{alg:encoding_6b} is reused in each block for the construction of $\widetilde U^{1:N}_{1:k}$ in Algorithm~\ref{alg:encoding_6}. Let $\breve{p}_{U_{i}^{1:N}X_{i}^{1:N}} $ be the joint distribution between of $(\breve U^{1:N}_{1:k},X^{1:N}_{1:k})$, then, similar to the proof of \eqref{eqn4} and \cite[Lemma 2]{chou2015using}, we have
 \begin{align}
     \mathbb{V} \left( \breve{p}_{U_{i}^{1:N}X_{i}^{1:N}} , p_{U^{1:N}_{}X^{1:N}_{}}   \right) \leq  \delta_N^{(1)} +\delta_N^{(2)},   \label{eqdistc}  \end{align}
     where $\delta_N^{(2)} \triangleq N \sqrt{\delta_N + 2 \delta_{N}^{(1)} (N - \log \delta_{N}^{(1)})}$.
  \begin{algorithm}
  \caption{Construction of $\breve U^{1:N}_{1:k}$ 
  }
  \label{alg:encoding_6b}
  \begin{algorithmic}   [1] 
    \REQUIRE Source observations $(X_{i}^{1:N})_{i \in [k]}$, where  $X_{i}^{1:N}$ is of length $N$ and corresponds to Block~$i\in [k]$; $k$ sequences $(\check {R}_i)_{i \in [k]}$  and $k$ sequences $(\bar{R}_i)_{i \in [k]}$ of  $|\mathcal{V}_{U} |$ 
 and $|\mathcal{V}_{U} \backslash \mathcal{V}_{U\vert X}|$, respectively,  uniformly distributed~bits.
            \FOR{Block $i=1$ to $k$}
        \STATE $\breve V_{i}^{1:N}[\mathcal{V}_{U\vert X}]\leftarrow \check R_{i}$
        \STATE $\breve V_{i}^{1:N}[\mathcal{V}_{U} \backslash \mathcal{V}_{U\vert X}]\leftarrow \bar R_{i}$

       \STATE Given $X_{i}^{1:N}$, successively draw the remaining bits of $\breve V_{i}^{1:N}$ according to\\$\breve p_{V_{i}^{1:N}X_{i}^{1:N}}\triangleq  \prod_{j=1}^N \breve p_{V_{i}^j\vert V_{i}^{1:j-1}X_{i}^{1:N} }p_{X^{1:N}} $   
       with
  \begin{align*}  
\!\!\!\!\!\!\!\!\!\!\!\!\!\!\!\!\!\!\!\!\!\!\!\!\!\!\!\!\!\!\!\!\!\!\!\!\!\!\!\!\!\!\!\!\!\!\!\! \breve p_{V_{i}^j\vert V_{i}^{1:j-1} X_{i}^{1:N}}(v_{i}^j\vert \breve V_{i}^{1:j-1}X_{i}^{1:N})  
\end{align*}
\begin{align}  \label{eqdeter}\triangleq\begin{cases}p_{V^j\vert V^{1:j-1}X^{1:N} }(v_{i}^j\lvert \breve V_{i}^{1:j-1}X_{i}^{1:N})     \text{ if } j\! \in \! \mathcal{V}^c_{U}\backslash\mathcal{H}^c_{U} , \\\mathds{1}\{v_j \!=\! \displaystyle\argmax_v p_{V^j|V^{1:j-1}} (v^j|v^{1:j-1}) \} \text{ if } j \!\in\! \mathcal{H}_{U}^c.         \end{cases}\end{align}

 \STATE Construct $\breve U_{i}^{1:N} \triangleq \breve V_{i}^{1:N} G_n$
  \ENDFOR
   \RETURN $\breve U^{1:N}_{1:k} \triangleq (\breve U^{1:N}_{i})_{i \in [k]}$ 
           \end{algorithmic}
\end{algorithm}

Then, we have
\begin{align*}
    &H(\widetilde U^{1:N}_{1:k} | Y^{1:N}_{\mathcal{U},1:k}) \\
    & \stackrel{(a)}\geq H(\widetilde U^{1:N}_{1:k} | Y^{1:N}_{\mathcal{U},1:k} R_1)\\
    & \stackrel{(b)}= \sum_{i=1}^k  H(\widetilde U^{1:N}_{i} | Y^{1:N}_{\mathcal{U},1:k} \widetilde U^{1:N}_{1:i-1} R_1)\\
    & \stackrel{(c)}= \sum_{i=1}^k  H(\widetilde U^{1:N}_{i} | Y^{1:N}_{\mathcal{U},1:k}  R_1)\\
    & = \sum_{i=1}^k  H(\widetilde U^{1:N}_{i} | Y^{1:N}_{\mathcal{U},i}  R_1)\\ 
    & \stackrel{(d)}= \sum_{i=1}^k  [H(\widetilde U^{1:N}_{i} \breve U^{1:N}_{i} | Y^{1:N}_{\mathcal{U},i}  R_1)- H( \breve U^{1:N}_{i} | Y^{1:N}_{\mathcal{U},i} \widetilde U^{1:N}_{i}  R_1)]\\
    & \geq \sum_{i=1}^k  [H( \breve U^{1:N}_{i} | Y^{1:N}_{\mathcal{U},i}  R_1)- H( \breve U^{1:N}_{i} | Y^{1:N}_{\mathcal{U},i} \widetilde U^{1:N}_{i}  R_1)]\\
    & \stackrel{(e)}= \sum_{i=1}^k  [H( \breve U^{1:N}_{i} | Y^{1:N}_{\mathcal{U},i}  )- H( \breve U^{1:N}_{i} | Y^{1:N}_{\mathcal{U},i} \widetilde U^{1:N}_{i}  R_1)]\\
    & \stackrel{(f)}= \sum_{i=1}^k  [H( \breve U^{1:N}_{i} | Y^{1:N}_{\mathcal{U},i}  ) \\
    & \phantom{--}- H( \breve V^{1:N}_{i}[\mathcal{V}_{U|X} \cup (\mathcal{V}^c_{U} \backslash \mathcal{H}^c_{U})] | Y^{1:N}_{\mathcal{U},i} \widetilde U^{1:N}_{i}  R_1)]\\
    & \stackrel{(g)}\geq \sum_{i=1}^k  [H( \breve U^{1:N}_{i} | Y^{1:N}_{\mathcal{U},i}  )- |\mathcal{V}_{U|X} | - (|\mathcal{V}^c_{U}| - |\mathcal{H}^c_{U}| )]\\
    & \stackrel{(h)}= \sum_{i=1}^k  [H( \breve U^{1:N}_{i} | Y^{1:N}_{\mathcal{U},i}  )- |\mathcal{V}_{U|X} | - o(N)]\\
    & \stackrel{(i)}= \sum_{i=1}^k  [N H(U | Y_{\mathcal{U}}  )- |\mathcal{V}_{U|X} | - o(N)]\\
    & \stackrel{(j)}= \sum_{i=1}^k  [N H(U | Y_{\mathcal{U}}  )- NH(U|X) - o(N)]\\
    &= Nk( H(U | Y_{\mathcal{U}}  )- H(U|X)) - o(kN),
\end{align*}
where $(a)$ holds because conditioning reduces entropy, $(b)$ and $(d)$ hold by the chain rule, $(c)$ holds by the Markov chain $ \widetilde U^{1:N}_{i} - (Y^{1:N}_{\mathcal{U},1:k} ,R_1) - \widetilde U^{1:N}_{1:i-1} $, $(e)$ holds because $R_1$ is independent of $(\breve U^{1:N}_{i} , Y^{1:N}_{\mathcal{U},i})$, $(f)$ holds by invertibility of $G_n$,  because $\breve V^{1:N}_{i}[\mathcal{V}_{U} \backslash \mathcal{V}_{U|X} ] = \widetilde V^{1:N}_{i}[\mathcal{V}_{U} \backslash \mathcal{V}_{U|X} ]$, and because $\breve V^{1:N}_{i}[\mathcal{H}^c_{U}]$ can be deterministically computed from $\breve V^{1:N}_{i}[\mathcal{H}_{U}]$ as seen in \eqref{eqdeter}, $(g)$ holds because $\mathcal{V}_{U} \subseteq \mathcal{H}_{U}$, $(h)$ holds by \cite[Lemma 1]{chou2013polar} and \cite{arikan2010source}, $(i)$ holds by \cite{csiszar1996almost} and \eqref{eqdistc},  $(j)$ holds by \cite[Lemma 1]{chou2013polar}.

   \subsection{Proof of Equations \texorpdfstring {$(\ref{eqn2})$}{Lg} and \texorpdfstring{$(\ref{eqn3})$}{Lg}}\label{subsec:proof}

This proof is by induction. 
Assume that $|\mathbb{A}|=1$. 
       The encoder and decoder for the case $|\mathbb{A}|=1$ are defined  in Algorithms $\ref{algm2}$ and $\ref{algm3}$, respectively.

      \begin{algorithm}
  \caption{Encoding when $|\mathbb{A}|=1$ }
  \label{alg:encoding_1}
  \begin{algorithmic}   [1] 
    \REQUIRE Source observations $X^{1:N}_{1:k} \triangleq (X_{i}^{1:N})_{i \in [k]}$, where  $X_{i}^{1:N}$ is of length $N$ and corresponds to Block~$i\in [k]$; a sequence $R_{1}$ of $|\mathcal{V}_{U|X}|$ uniformly distributed bits.
    \STATE Transmit $R_{1}$ to Bob over the public channel  
    \STATE Construct $\widetilde{U}_{1:k}^{1:N}$ using Algorithm $\ref{algm1}$ with input the random variables $X^{1:N}_{1:k}$ and $R_{1}$ 
   \STATE  For $i \in [k]$ define \begin{align}
            M_{i}\triangleq 
      (\widetilde{V}_{i}^{1:N}[ \mathcal{H}_{U|B_1}\backslash \mathcal{V}_{U|X}]).\label{eqnen1}
     \end{align}
   
     \STATE Transmit $M_{i}$  to Bob over the public channel
       \STATE Define $ f^{1:N}_{\{1\}}( X^{1:N}_{1:k})\triangleq (\widetilde{U}_{1:k}^{1:N} ,M_{1:k})$
    \end{algorithmic}\label{algm2}
\end{algorithm}
\begin{algorithm}
  \caption{Decoding when $|\mathbb{A}|=1$}
  \label{alg:dencoding}
  \begin{algorithmic}   [1] 
    \REQUIRE Source observations $B^{1:N}_{1,1:k} \triangleq (B_{1,i}^{1:N})_{i \in [k]}$, where  $B_{1,i}^{1:N}$ is of length $N$ and corresponds to Block~$i\in [ k]$;
    the messages $M_{1:k}$ and $R_{1}$. 
 \FOR{Block $i=1$ to $k$} 
   \STATE Form $\widetilde{V}_{i}^{1:N}[ \mathcal{H}_{U|B_1}]$ from $M_i$ and $R_{1}$
 \STATE Create $\hat{U}_{1,i}^{1:N}$, an estimate of $\widetilde{U}_{i}^{1:N}$, from $( \widetilde{V}_{i}^{1:N}[ \mathcal{H}_{U|B_1}], B_{1,i}^{1:N})$ by Lemma \ref{lem1}.
     \ENDFOR
      \STATE Define $\hat{U}_{1,1:k}^{1:N} \triangleq (\hat{U}_{1,i}^{1:N})_{i\in[k]}$
     \STATE Define $g_{\{1\},1}^{1:N}(M_{1:k}, B^{1:N}_{1,1:k}) \triangleq \hat{U}_{1,1:k}^{1:N} $
               \end{algorithmic}\label{algm3}
\end{algorithm}
             
The communication rate is the rate of $M_i$, $i \in [k]$, plus the rate of the randomness $R_{1}$, i.e.,
      \begin{align*}
    R_{\{1\}}   &\stackrel{(a)}=\frac{k|\mathcal{H}_{U|B_1}\backslash \mathcal{V}_{U|X}|+|\mathcal{V}_{U|X}|}{kN}
   \\
     & \xrightarrow{N \to +\infty}\frac{k(H(U|B_1)-H(U|X))+H(U|X)
     }{k}\\
     &\leq H(U|B_1)-H(U|X) + \frac{H(U|X)}{k}   \\ & \xrightarrow{k \to \infty} H(U|B_1)-H(U|X)\\
     &\stackrel{(b)} =I(U,X|B_1),
        \end{align*} where $(a)$ holds by the definition of $M_i$ for $i \in [k]$ (Algorithm~$\ref{algm2}$, Line $3$), and the first limit holds by Lemma~$\ref{lem1}$ and \cite[Lemma 1]{chou2013polar} and because $U-X-B_1$ forms a Markov chain so that $\mathcal{V}_{U|X}\subset \mathcal{H}_{U|B_1}$, $(b)$ holds by the Markov Chain $U-X-B_1$.

Then, we study the probability of error as follows.  For $i \in [k]$, consider a coupling \cite{aldous1983random} between $\widetilde p_{ U_{i}^{1:N}}$ and $p_{ U_{}^{1:N}}$ such that $\mathbb{P}[\mathcal{E}_{i}]=\mathbb{V}(\widetilde p_{ U_{i}^{1:N}},p_{ U_{}^{1:N}} )$, where $\mathcal{E}_{i}\triangleq \{\widetilde U_{i}^{1:N}\neq U_{}^{1:N}\}$.          
For $i \in [k]$, we have 
   \begin{align}\nonumber
   & \mathbb{P}[\hat U_{1, i}^{1:N}\neq \widetilde U_{i}^{1:N}]\\\nonumber
   & = \mathbb{P}[\hat U_{1, i}^{1:N}\neq \widetilde U_{i}^{1:N}|\mathcal{E}_{i_{}}]\mathbb{P}[\mathcal{E}_{i_{}}]+\mathbb{P}[\hat U_{1, i}^{1:N}\neq \widetilde U_{i}^{1:N}|\mathcal{E}^c_{i_{}}]\mathbb{P}[\mathcal{E}^c_{i_{}}]\\\nonumber
       & \leq \mathbb{P}[\mathcal{E}_{i_{}} ]+\mathbb{P}[\hat U_{1, i}^{1:N}\neq \widetilde U_{i}^{1:N}|\mathcal{E}_{i_{}}^c ]\\\nonumber
              &=\mathbb{V}(\widetilde p_{ U_{i}^{1:N}},p_{ U_{}^{1:N}} )+\mathbb{P}[\hat U_{1, i}^{1:N}\neq \widetilde U_{i}^{1:N}|\mathcal{E}_{i_{}}^c]\\\nonumber
               &\leq \mathbb{V}(\widetilde p_{X_{i}^{1:N} U_{i}^{1:N}},p_{X_{}^{1:N} U_{}^{1:N}} )+\mathbb{P}[\hat U_{1, i}^{1:N}\neq \widetilde U_{i}^{1:N}|\mathcal{E}_{i_{}}^c]\\\nonumber
               &\stackrel{(a)}\leq \sqrt{2 \ln 2}\sqrt{N\delta_N^{}}+\mathbb{P}[\hat U_{1, i}^{1:N}\neq \widetilde U_{i}^{1:N}|\mathcal{E}_{i_{}}^c]\\
                  &\stackrel{(b)}\leq \sqrt{2 \ln 2}\sqrt{N\delta_N^{}}+N\delta_N,
                  \label{eqnerror}       \end{align}
             where $(a)$ holds by  $(\ref{eqn5})$ and Pinsker's inequality, 
              $(b)$ holds because $\mathbb{P}[\hat U_{1, i}^{1:N}\neq \widetilde U_{i}^{1:N}|\mathcal{E}_{i_{}}^c]\leq N\delta_N$, $\delta_N\triangleq 2^{-N^{\beta}}$, $\beta \in]0,\frac{1}{2}[, $  by \cite{arikan2010source}.  
      Finally, 
       \begin{align*}
           \mathbb{P}[\hat U_{1, 1:k}^{1:N}\neq \widetilde U_{1:k}^{1:N}] & \stackrel{(a)} \leq \displaystyle\sum_{{i=1}}^{{k}} \mathbb{P}[\hat U_{1,i}^{1:N}\neq \widetilde U_{i}^{1:N}]\\
           &\stackrel{(b)} \leq k(\sqrt{2 \log 2}\sqrt{N\delta_N^{}}+N\delta_N),
           \end{align*}
           where $(a)$ holds by the union bound, $(b)$ holds by $(\ref{eqnerror})$.        This proves the theorem for $|\mathbb{A}|=1$. 
           
           Now suppose that the theorem holds for $|\mathbb{A}|=L$. 
      Fix $\epsilon>0$ and $\delta>0$. By the induction hypothesis,
      there exist $k_1,n_1$ such that for any $n\geq n_1$, there is an encoder
      \begin{align}
         f_{[L]}^{1:k_1N}: \mathcal{X}^{k_1N}\to \mathcal{U}^{k_1N} \times \mathcal{M},\label{eqnfL}
     \end{align}and $L$ decoders
      \begin{align*}
         g_{[L], l}^{1:k_1N}: \mathcal{M}\times \mathcal{B}_{l}^{k_1N} \to \mathcal{U}^{k_1N}, \text{ for }  l\in [L],
     \end{align*}such that  the communication rate  is
      \begin{align}
\displaystyle\max_{ l\in [L]} I(U;X|B_l)+\frac{\delta}{2}\label{eqn11},
     \end{align}
and, for $k_2$ large enough,\footnote{Specifically, choose $k_2$ large enough such that $$\left[\!1\!\!+\! \frac{1}{k_2}\right] \!\!\! \left[ \displaystyle\max_{ l\in [L+1]}\!\! I(U;X|B_l)+\frac{\delta}{2} \right]\! +\!\frac{H(U|X)}{k_2} \leq \displaystyle \max_{ l\in [L+1]} \!\!I(U;X|B_l)\!+\!\delta.$$} the probability of error satisfies
      \begin{align}
        \mathbb{P} [  g_{[L], l}^{1:k_1N}(M_{[L]}, {B}_{l,1:k_1}^{1:N} ) \neq {\widetilde U}^{1:N}_{1:k_1}]\leq \frac{\epsilon}{k_2},\forall l\in [L]\label{eqn12},
     \end{align}
     where $(\widetilde{U}_{1:k_1}^{1:N},M_{[L]}) \triangleq f_{[L]}^{1:k_1N} ({X}_{1:k_1}^{1:N}) $.

Next, as in the case $|\mathbb{A}|=1$, define the encoder
    \begin{align}
      f^{1:k_1N}_{\{L+1\}} :  \mathcal{X}^{k_1N}\to \mathcal{U}^{k_1N} \times \mathcal{M}_{\{L+1\}},
    \label{eqnf3}
     \end{align}
     and the decoder 
      \begin{align*}
         g_{\{ L+1\},L+1}^{1:k_1N}: \mathcal{M}_{\{L+1\}}\times \mathcal{B}_{L+1}^{k_1N} \to \mathcal{U}^{k_1N}.
     \end{align*} 
From the case $|\mathbb{A}|=1$, there exists $n_2$ such that for $n\geq n_2$, the message rate 
satisfies  \begin{align}
     R_{\{L+1\}}= I(U;X|B_{L+1})+\frac{\delta}{2},\label{eqn13}
     \end{align}
     and the probability of error satisfies 
      \begin{align}
        \mathbb{P} [  g_{\{L+1 \}, L+1}^{1:k_1N}(M_{\{L+1\}}, {B}_{L+1,1:k_1}^{1:N} ) \neq {\widetilde U}^{1:N}_{1:k_1}]\leq \frac{\epsilon}{k_2},\label{eqn14}
     \end{align}
      where $(\widetilde{U}_{1:k_1}^{1:N},M_{\{L+1\}}) \triangleq f_{\{L+1\}}^{1:k_1N} ({X}_{1:k_1}^{1:N}) $.    \textcolor{black}{We now describe the encoder and decoders for the case $|\mathbb{A}|= L+1$}. 
     Let $k\triangleq ~k_1k_2$, $n_0\triangleq ~\max(n_1, n_2)$. 
     The encoder is defined in Algorithm $\ref{algm4}$, the first $L$ decoders and Decoder~$L+1$ are defined  in Algorithms $\ref{algm5}$ and $\ref{algm6}$, respectively.
     \begin{algorithm}[h]
  \caption{Encoding for the case $|\mathbb{A}|=L+1$ }
  \label{alg:encoding_2}
  \begin{algorithmic}   [1] 
    \REQUIRE Source observations $X^{1:k_1N}_{1:k_2} \triangleq (X_i^{1:k_1N})_{i \in [k_2]}$, where  $X_i^{1:k_1N}$ is of length $k_1N$ and corresponds to Block~$i\in [ k_2]$; a vector $R_{1}$ of $|\mathcal{V}_{U|X}|$ uniformly distributed bits.
    \STATE Transmit $R_{1}$ to Bob over the public channel   
 \STATE Construct $\widetilde{U}_{1:k_2}^{1:k_1N}$ using Algorithm $\ref{algm1}$ from the random variables $X^{1:k_1N}_{1:k_2}$ and $R_{1}$ 
 \STATE Define $(\widetilde{U}_{i}^{1:k_1N},M_{[L],i}) \triangleq f^{1:k_1N}_{[L]}(X_i^{1:k_1N})$ for $i \in [k_2]$
  \STATE Define $(\widetilde{U}_{i}^{1:k_1N}\!,M_{\{L+1\},i}) \! \triangleq \! f^{1:k_1N}_{\{L+1\}}( X_i^{1:k_1N})$ for $i \in [k_2]$
 \STATE Define        \begin{align}\nonumber
        & M_{[L+1]}\\\triangleq 
         & \left[M_{[L],1},  {( M_{[L],i+1}\oplus M_{\{L+1\},i})_{i=1}^{k_2-1}}, M_{\{L+1\},k_2}\right],
        \label{eqnmold}
  \end{align}
   where       $\oplus$ denotes bitwise modulo two addition, and if the two sequences in the sum have different lengths, then the shorter sequence is padded with zeros. 
        \STATE Transmit $M_{[L+1]}$ to Bob over the public channel
      \STATE Define $ f^{1:kN}_{[L+1]}( X^{1:k_1N}_{1:k_2})\triangleq (\widetilde{U}_{1:k_2}^{1:k_1N} ,M_{[L+1]})$
           \end{algorithmic}\label{algm4}
\end{algorithm} 
      \begin{algorithm}
  \caption{Decoding at Decoder $l\in[L]$}
  \label{alg:dencoding_1}
  \begin{algorithmic}   [1] 
    \REQUIRE Source observations $B^{1:k_1N}_{l,1:k_2} \triangleq (B_{l,i}^{1:k_1N})_{i \in [k_2]}$, where  $B_{l,i}^{1:k_1N}$ is of length $k_1N$ and corresponds to Block~$i\in [ k_2]$; 
    the message $M_{[L+1]}$ and $R_{1}$. 
    \STATE Create an estimate of $\widetilde U_{1}^{1:k_1N}$
\begin{align}
        &\hat{U}_{l, 1}^{1:k_1N}\triangleq g_{[L], l}^{1:k_1N}( M_{[L],1}, B_{l,1}^{1:k_1N})\label{eqnd1},
     \end{align} 
     where $M_{[L],1}$ is contained in $M_{[L+1]}$ 
 \FOR{Block $i=1$ to $k_2-1$} 
\STATE Define $(\widetilde{U}_{i}^{1:k_1N}, \hat M_{\{L+1\},i}) \triangleq f_{\{L+1\}}^{1:k_1N}(\hat U_{l,i}^{1:k_1N})$
   \STATE Since  $M_{[L+1]}$ contains $M_{[L],i+1}\oplus M_{\{L+1\},i}$, the decoder can form 
   \begin{align}\nonumber
      & \hat{U}_{l, i+1}^{1:k_1N}\\& \triangleq g_{[L], l}^{1:k_1N}\left( M_{[L],i+1}\oplus M_{\{L+1\},i} \oplus \hat M_{\{L+1\},i}, B_{l,i+1}^{1:k_1N}\right)\label{eqnd2}
   \end{align} 
     \ENDFOR
     \STATE Define $\hat{U}_{l,1:k_2}^{1:k_1N} \triangleq (\hat{U}_{l,i}^{1:k_1N})_{i\in[k_2]}$
     \STATE Define $g_{[L+1],l}^{1:kN}(M_{[L+1]}, B^{1:k_1N}_{l,1:k_2}) \triangleq \hat{U}_{l,1:k_2}^{1:k_1N} $
               \end{algorithmic}\label{algm5}
\end{algorithm}
\begin{algorithm}
  \caption{Decoding  at Decoder $L+1$
}
  \label{alg:dencoding_3}
  \begin{algorithmic} [1] 
    \REQUIRE Source observations $B^{1:k_1N}_{L+1,1:k_2} \triangleq (B_{L+1,i}^{1:k_1N})_{i \in [k_2]}$, where  $B_{L+1,i}^{1:k_1N}$ is of length $k_1N$ and corresponds to Block~$i\in [ k_2]$; 
    the message $M_{[L+1]}$ and $R_{1}$.  
\STATE    Create an estimate 
    \begin{align}
        &\hat{U}_{L+1,k_2}^{1:k_1N}\triangleq g_{\{L+1\}, L+1}^{1:kN}(  M_{\{L+1\},k_2}, B_{L+1,k_2}^{{1:k_1N}}),
        \label{eqnd3}
     \end{align}
       where $ M_{\{L+1\},k_2}$ is contained in $M_{[L+1]}$
 \FOR{Block $i=k_2-1$ to $1$} 
\STATE Define $(\widetilde{U}_{i+1}^{1:k_1N}, \hat M_{[L],i+1}) \triangleq f_{[L]}^{1:k_1N}(\hat U_{L+1,i+1}^{1:k_1N})$
   \STATE Since  $M_{[L+1]}$ contains $ M_{\{L+1\},i} \oplus M_{[L],i+1}$, the decoder can form
    \begin{align}\nonumber
      & {\hat{U}_{L+1,i}}^{1:k_1N}\triangleq g_{\{L+1\},L+1, i}^{1:k_1N}\\ & \left( M_{\{L+1\},i} \oplus M_{[L],i+1} \oplus \hat M_{[L],i+1}, B^{1:k_1N}_{L+1,i}\right). \label{eqnd4}
     \end{align} 
    \ENDFOR
    \STATE Define $\hat{U}_{L+1,1:k_2}^{1:k_1N} \triangleq (\hat{U}_{L+1,i}^{1:k_1N})_{i\in[k_2]}$
     \STATE Define $g_{[L+1],L+1}^{1:kN}(M_{[L+1]}, B^{1:k_1N}_{L+1,1:k_2}) \triangleq \hat{U}_{L+1,1:k_2}^{1:k_1N} $                    \end{algorithmic}\label{algm6}
\end{algorithm}

Since $\widetilde{U}_{1:k_2}^{1:k_1N}$ is constructed with Algorithm $\ref{algm1}$ as indicated in Line $2$ of Algorithm $\ref{algm4}$, by $(\ref{eqn5})$, Equation $(\ref{eqn4})$ holds. 

Next, the overall communication rate is the rate of $M_{[L+1]}$ plus the rate of $R_{1}$,
\begin{align*}
     R_{[L+1]} 
& = \frac{\log|\mathcal M_{[L+1]}|+|\mathcal{V}_{U|X}|}{kN}
   \\
   &\stackrel{(a)}\leq \frac{ k_1 k_2 N \left( \displaystyle \max_{ l\in [L+1]} I(U;X|B_l)+\frac{\delta}{2} \right)  }{kN}\\ & 
  \phantom{--} +  \frac{ k_1 N \left(I(U;X|B_{L+1})+\frac{\delta}{2} \right)}{kN}+\frac{|\mathcal{V}_{U|X}|}{kN}
   \\
      &\leq  \left(1+ \frac{1}{k_2}\right)  \left( \max_{ l\in [L+1]} I(U;X|B_l)+\frac{\delta}{2} \right)  + \frac{|\mathcal{V}_{U|X}|}{kN}
   \\
     & \xrightarrow{N \to +\infty}\left(1+ \frac{1}{k_2}\right)  \left( \max_{ l\in [L+1]} I(U;X|B_l)+\frac{\delta}{2} \right) \\&\phantom{----}+\frac{H(U|X)}{k_2}\\
      & \stackrel{(b)} \leq  \max_{ l\in [L+1]} I(U;X|B_l)+\delta,
     \end{align*}  
 where $(a)$ holds by the definition of $M_{[L+1]}$ in \eqref{eqnmold}, $(\ref{eqn11})$, and $(\ref{eqn13})$,  the limit holds by \cite[Lemma~7]{chou2016polar}, and $(b)$ holds by definition of $k_2$.

   We now analyze the probability of error. The  probability of error at Decoder $l \in [L]$ is
       \begin{align}\nonumber
       & \mathbb{P} [{\widetilde U}^{1:k_1N}_{1:k_2}\neq  \hat{U}_{l,1:k_2}^{1:k_1N} ] \\\nonumber
    &= \displaystyle\sum_{i=1}^{k_2}\mathbb{P} [{\widetilde U}_{i}^{1:k_1N}\neq  \hat{U}_{l,i}^{1:k_1N} ,  {\widetilde U}_{1:i-1}^{1:k_1N}=  \hat{U}_{l,1:i-1}^{1:k_1N}  ]\\\nonumber
    &\leq  \displaystyle\sum_{i=1}^{k_2}\mathbb{P} [{\widetilde U}_{i}^{1:k_1N}\neq  \hat{U}_{l,i}^{1:k_1N} |  {\widetilde U}_{1:i-1}^{1:k_1N}=  \hat{U}_{l,1:i-1}^{1:k_1N}  ]\\
    &       \leq \epsilon,\label{eqne1}
     \end{align}
     where the last inequality holds by $(\ref{eqn12})$.

 Similarly, the  probability of error at Decoder $L+1$ is\begin{align}\nonumber
       & \mathbb{P} [{\widetilde U}^{1:k_1N}_{1:k_2}\neq  \hat{U}_{L+1,1:k_2}^{1:k_1N} ] \\\nonumber
         &= \displaystyle\sum_{i=0}^{k_2-1}\mathbb{P} [{\widetilde U}_{k_2-i}^{1:k_1N}\! \neq \! \hat{U}_{L+1,k_2-i}^{1:k_1N} ,  {\widetilde U}_{k_2-i+1:k_2}^{1:k_1N}\! = \!  \hat{U}_{L+1,k_2-i
         +1:k_2}^{1:k_1N}  ]\\\nonumber
              &\leq  \displaystyle\sum_{i=0}^{k_2-1}\mathbb{P} [{\widetilde U}_{k_2-i}^{1:k_1N}\! \neq \!  \hat{U}_{L+1,k_2-i}^{1:k_1N} |  {\widetilde U}_{k_2-i+1:k_2}^{1:k_1N}\! = \!  \hat{U}_{L+1,k_2-i
         +1:k_2}^{1:k_1N}  ]\\ 
                       &      \leq \epsilon,\label{eqne2}
               \end{align}
               where the last inequality holds by   $(\ref{eqn14})$.
\section{Proof of Proposition \ref{thm2}}\label{sec:proof of theorem 1}

\subsection{Coding Scheme}\label{seccodingscheme}

Using  Theorem $\ref{thm1}$, consider a coding scheme $\mathcal{C}$ as defined in Definition $\ref{dfn1}$ that achieves the public communication rate   
\begin{align}
    R_p=\displaystyle\max_{ \mathcal{A}\in \mathbb{A}}I(U;X|Y_{\mathcal{A}}) + \delta
    \label{eqnrate1}
\end{align} for  the joint distribution $(p_{U^{1:N}X^{1:N}Y_{\mathcal{A}}^{1:N}})_{\mathcal{A}\in \mathbb{A}} \triangleq  (\prod_{i=1}^N p_{U|X} p_{XY_{\mathcal{A}}})_{\mathcal{A}\in \mathbb{A}}$, where $(p_{XY_{\mathcal{A}}})_{\mathcal{A}\in \mathbb{A}}$ corresponds to the source model.
For $\mathcal{A}\in \mathbb{A}$, let $\widetilde p_{U^{1:N}X^{1:N}Y_{\mathcal{A}}^{1:N}M}$ be the probability distribution of the random variables $(\widetilde U^{1:N},  X^{1:N},Y_{\mathcal{A}}^{1:N},M)$ induced by the coding scheme $\mathcal{C}$.

Then, repeat $t$ times and independently the coding scheme~$\mathcal{C}$, and denote the random variables induced by these $t$ repetitions by $(\widetilde U^{1:tN},  X^{1:tN},Y_{\mathcal{A}}^{1:tN},M^t)$. 
Additionally, the estimate of $\widetilde U^{1:tN}$ at Decoder~$\mathcal{A}\in \mathbb{A}$ is denoted by $\hat U^{1:tN}_{\mathcal{A}}$.

Consider the function  
$F : \mathcal R \times \{0, 1\}^{tN} \longrightarrow \{0, 1\}^{tNR_s}, (R,X) \mapsto (R\odot X)_{tNR_s}$, 
 where $R_s$ will be defined later, $\mathcal{R}\triangleq\{0,1\}^{tN} \backslash \{\mathbf{0}\}$,  $\odot$ is the multiplication in the finite field $\textup{GF}(2^{tN})$, and $(\cdot)_{d}$ selects the $d$ most significant bits. Note that by \cite{bellare2012semantic} $\mathcal{F} \triangleq \{X\mapsto F(R,X) : R \in \mathcal{R}\}$ forms a family of two-universal hash functions.

The encoder forms $S\triangleq F(R, \widetilde U^{1:tN})$, where $R$ represents the uniformly random choice of the hash function in $\mathcal{F}$  and is transmitted over the public communication channel. Hence, $R$ is made available to all parties.\footnote{Note that by a hybrid argument, e.g., \cite{bellare2012semantic}, the communication rate associated with the transmission of $R$ can be made negligible.} Then, Decoder $\mathcal{A}\in \mathbb{A}$ forms  $\hat S({\mathcal{A}}) \triangleq F(R, \hat U_{\mathcal{A}}^{1:tN})$ for $\mathcal{A}\in \mathbb{A}$.

\subsection{Coding Scheme Analysis}\label{analysis}
For the coding scheme analysis, we will need the following two lemmas.
\begin{lem}[{{\cite {renner2008security}}}]
    Let  $U$
    and $Z$ be random variables distributed according to $p_{UZ}$  over $\mathcal{U} \times \mathcal Z$. Let $F : \mathcal R \times \{0, 1\}^k \longrightarrow \{0, 1\}^r$ be a two universal hash function.
Let $R$ be uniformly distributed over $\mathcal R$.
Then, we have for any $z \in Z$,  $\epsilon >0$,
\begin{align}
 \mathbb V(p_{F{(R, U)},R,Z}, p_{U_{\mathcal{K}}} p_{U_{\mathcal{F}}} p_Z) \leq 2\epsilon +\sqrt{2^{
r-{H_{\infty}^{\epsilon}(p_{UZ} |p_Z )}}}, \label{eqn24}
\end{align}
where $p_{U_{\mathcal K}}$ and $p_{U_F }$ are the uniform distributions over $\{0, 1\}^r$ and $\mathcal R$, respectively, and the $\epsilon$-smooth conditional min-entropy is defined as  \cite {renner2008security},
\begin{align*}
    H_{\infty}^{\epsilon}(p_{U Z}|p_{Z}) \triangleq \!\! \displaystyle \max_{{r_{U Z} \in \mathcal{B}^{\epsilon}(p_{UZ}) }}\displaystyle \min_{{ z \in \textup{supp}(p_{Z}) }}\displaystyle \min_{{ u \in\mathcal U} }\log\left[\frac{p_{Z}(z)}{r_{UZ}(u,z)}\right]\!.
    \end{align*}
    where $\textup{supp}(p_{Z})\triangleq \{z\in \mathcal{Z}:p_{Z}(z)>0\}$ and $\mathcal{B}^{\epsilon}(p_{UZ})\triangleq \{(r_{UZ}:\mathcal U\times \mathcal{Z}\rightarrow \mathbb{R}_+):\mathbb V(p_{UZ}, r_{UZ})\leq \epsilon\}$.
\label{lem2}
\end{lem}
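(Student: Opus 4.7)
The proof plan is to follow the standard smoothing-plus-leftover-hash template. First I would unpack the definition of $H_\infty^\epsilon(p_{UZ}|p_Z)$: since the outer $\max$ is attained, I pick $r_{UZ}\in\mathcal{B}^\epsilon(p_{UZ})$ with the property that for every $z\in\operatorname{supp}(p_Z)$ and every $u\in\mathcal{U}$,
\begin{equation*}
r_{UZ}(u,z)\leq p_Z(z)\,2^{-H_\infty^\epsilon(p_{UZ}|p_Z)}.
\end{equation*}
Because $R$ is drawn independently from $(U,Z)$ under both $p$ and $r$, pushing forward through the map $(U,Z)\mapsto (F(R,U),R,Z)$ preserves variational distance, so $\mathbb{V}(p_{F(R,U),R,Z},r_{F(R,U),R,Z})\leq \epsilon$ and likewise $\mathbb{V}(p_Z,r_Z)\leq \epsilon$.

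Second, I would apply the classical leftover hash lemma to the smoothed $r_{UZ}$. For each $z\in\operatorname{supp}(p_Z)$, two-universality gives $\Pr_R[F(R,u)=F(R,u')]\leq 2^{-r}$ for $u\neq u'$, and combining this with the pointwise upper bound yields a collision-probability estimate for the pair $(F(R,U),R)$ under $r_{UZ}(\cdot,z)$. Applying the standard Cauchy--Schwarz bound $\mathbb{V}(q,q_{\mathrm{unif}})\leq \tfrac12\sqrt{|\mathcal{K}|\sum_k q(k)^2-1}$ conditionally on $Z=z$ and then taking the expectation over $Z$ (using Jensen's inequality to pull the square root outside the expectation) produces
\begin{equation*}
\mathbb{V}\bigl(r_{F(R,U),R,Z},\,p_{U_\mathcal{K}}\,p_{U_\mathcal{F}}\,r_Z\bigr) \leq \sqrt{2^{\,r-H_\infty^\epsilon(p_{UZ}|p_Z)}}.
\end{equation*}

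Finally, I would assemble the result via the triangle inequality,
\begin{align*}
&\mathbb{V}(p_{F(R,U),R,Z},\,p_{U_\mathcal{K}}p_{U_\mathcal{F}}p_Z) \\
&\qquad\leq \mathbb{V}(p_{F(R,U),R,Z},r_{F(R,U),R,Z}) \\
&\qquad\phantom{\leq{}}+\mathbb{V}(r_{F(R,U),R,Z},p_{U_\mathcal{K}}p_{U_\mathcal{F}}r_Z) \\
&\qquad\phantom{\leq{}}+\mathbb{V}(p_{U_\mathcal{K}}p_{U_\mathcal{F}}r_Z,p_{U_\mathcal{K}}p_{U_\mathcal{F}}p_Z),
\end{align*}
which is bounded by $2\epsilon+\sqrt{2^{r-H_\infty^\epsilon(p_{UZ}|p_Z)}}$. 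The main obstacle I expect is the bookkeeping in the middle step: the smoothed $r_{UZ}$ need not be a proper probability distribution and the min-entropy in the lemma is measured against $p_Z$ rather than $r_Z$, so the collision-probability computation must exploit the precise form of the pointwise bound $r_{UZ}(u,z)\leq p_Z(z)\,2^{-H_\infty^\epsilon(p_{UZ}|p_Z)}$ rather than invoke a generic conditional leftover hash lemma off-the-shelf. Everything else is a routine combination of variational distance monotonicity and the 2-universal collision estimate.
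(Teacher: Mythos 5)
The paper does not prove this lemma; it is imported verbatim (as Lemma~\ref{lem2}) from Renner's thesis~\cite{renner2008security}, so there is no in-paper proof to compare yours against. That said, your plan follows the standard route, which is also the one Renner uses: pick a smoothing witness $r_{UZ}\in\mathcal{B}^{\epsilon}(p_{UZ})$ attaining the max in the definition of $H_\infty^\epsilon$, apply the (non-smooth) two-universal leftover hash estimate to $r_{UZ}$, and close with a triangle inequality, paying $\epsilon$ twice for the distance $p\to r$ in the hashed triple and $r_Z\to p_Z$ in the product reference. The data-processing step you invoke (push-forward through $(U,Z)\mapsto(F(R,U),R,Z)$ with $R$ independent does not increase variational distance) is correct and is exactly what delivers the two $\epsilon$'s.

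The one place that needs explicit care is the middle estimate, and you have correctly flagged it. Because $r_{UZ}$ need only be a nonnegative function, one should not normalize by $r_Z$ before applying Cauchy--Schwarz; instead, work with the un-normalized $r_{UZ}(\cdot,z)$ for each $z$, bound the collision sum by two-universality as $\sum_{k,\rho} r_{K,R,Z}(k,\rho,z)^2 \le |\mathcal{R}|^{-1}\bigl(\sum_u r_{UZ}(u,z)^2 + 2^{-r} r_Z(z)^2\bigr)$, and then use the pointwise bound $r_{UZ}(u,z)\le p_Z(z)\,2^{-H_\infty^\epsilon}$ to get $\sum_u r_{UZ}(u,z)^2\le p_Z(z)\,r_Z(z)\,2^{-H_\infty^\epsilon}$. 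Summing over $z$ and applying Cauchy--Schwarz in $z$ then yields $\sqrt{2^{\,r-H_\infty^\epsilon}}$ up to a factor $\sqrt{\sum_z r_Z(z)}$, which is at most $\sqrt{1+\epsilon}$ and can be absorbed (Renner's formulation is slightly tighter by a factor $\tfrac12$, but the paper uses an unnormalized $\mathbb{V}$ and states a looser constant, so this is consistent). In short, your outline is sound and mirrors the source; it is not a complete proof, but the gap you identify is exactly the right one and closes as sketched above.
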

  
  \begin{lem}
    [{\cite{holenstein2011randomness}}] 
    Let $p_{X^L Z^L}\triangleq \displaystyle \prod_{i=1}^L p_{X^i Z^i} $  be a probability distribution over $\mathcal{X}^L \times \mathcal{Z}^L $. For any~$\delta>0$,\\ $$H_\infty^\epsilon(p_{X^L Z^L}|p_{Z^L})\geq H(X^L|Z^L)-L\delta,$$ where $\epsilon =2^{-\frac{L\delta^2}{2\log^2(|\mathcal{X}|+3)}}$.\label{lem3}
\end{lem}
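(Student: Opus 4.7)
The plan is to exploit that, for a product distribution, the conditional negative log-likelihood $-\log p_{X^L|Z^L}(x^L|z^L) = \sum_{i=1}^L (-\log p_{X^i|Z^i}(x^i|z^i))$ is a sum of i.i.d.\ random variables with mean $H(X|Z)$. The natural smoothing is then to truncate to the typical set on which this sum is not unusually small: let $\mathcal{T} \triangleq \{(x^L,z^L) : -\log p_{X^L|Z^L}(x^L|z^L) \geq H(X^L|Z^L) - L\delta\}$ and put $r_{X^L Z^L} \triangleq p_{X^L Z^L} \cdot \mathds{1}_{\mathcal{T}}$. Since on $\mathcal{T}$ one has $r_{X^L Z^L}(x^L,z^L) \leq p_{Z^L}(z^L) \cdot 2^{-(H(X^L|Z^L) - L\delta)}$, this candidate already realizes the desired pointwise min-entropy bound; it remains to verify the smoothing budget $\mathbb{V}(p_{X^L Z^L}, r_{X^L Z^L}) = \mathbb{P}[(X^L,Z^L) \notin \mathcal{T}] \leq \epsilon$.

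The main obstacle is therefore a concentration inequality for $S_L \triangleq \sum_{i=1}^L W_i$, where $W_i \triangleq -\log p_{X^i|Z^i}(X^i|Z^i)$ are i.i.d.\ with $\mathbb{E}[W_i] = H(X|Z)$, namely bounding $\mathbb{P}[S_L < L(H(X|Z) - \delta)]$ by $2^{-L\delta^2/(2\log^2(|\mathcal X|+3))}$. The difficulty is that $W_i$ is unbounded in general, since very small atoms of $p_{X|Z}$ produce arbitrarily large values. The standard remedy is to truncate at a threshold $T$: set $\widetilde W_i \triangleq \min(W_i, T)$, apply Hoeffding's inequality to the bounded i.i.d.\ sequence $\widetilde W_i \in [0,T]$, and separately control the expected defect $\mathbb{E}[W_i - \widetilde W_i]$ using a tail estimate for the surprisal. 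Choosing $T$ of order $\log(|\mathcal X|+3)$ balances the two error contributions and produces the stated exponent; the ``$+3$'' is the cushion needed so that the bound holds uniformly across all alphabet sizes, including the degenerate case $|\mathcal X| = 1$, where a naive $\log |\mathcal X|$ denominator would blow up.

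The last step is bookkeeping. Once the tail bound gives $\mathbb{P}[S_L < L(H(X|Z) - \delta)] \leq \epsilon$, the candidate $r_{X^L Z^L}$ lies in $\mathcal{B}^\epsilon(p_{X^L Z^L})$; inserting it into the definition of $H_\infty^\epsilon(p_{X^L Z^L}|p_{Z^L})$ and taking the minima over $z^L \in \mathrm{supp}(p_{Z^L})$ and $x^L \in \mathcal{X}^L$ yields the claimed lower bound $H(X^L|Z^L) - L\delta$, with pairs $(x^L, z^L) \notin \mathcal{T}$ contributing $+\infty$ to the inner minimum and hence never binding. The entire argument is oblivious to the joint structure beyond the product assumption, which is precisely why the paper can invoke the lemma block-by-block in combination with Theorem~\ref{thm1}.
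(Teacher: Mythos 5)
The paper does not prove this lemma; it cites \cite{holenstein2011randomness} and uses it as a black box, so there is no in-paper proof to compare against. Your overall framework is the right one and matches how such smooth-min-entropy AEP statements are established: define the typical set $\mathcal{T}$ via the surprisal sum, set $r_{X^LZ^L}=p_{X^LZ^L}\cdot\mathds{1}_{\mathcal{T}}$, observe $\mathbb{V}(p,r)=\mathbb{P}[\mathcal{T}^c]$, and note that points with $r=0$ contribute $+\infty$ to the inner minimum and never bind; all of that is correct and clean.

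The gap is in the concentration step, where truncation plus Hoeffding does not deliver the stated $\epsilon$. After truncating at $T$ and writing $\widetilde W_i\triangleq\min(W_i,T)$ with bias $b\triangleq\mathbb{E}[W_i-\widetilde W_i]$, the inclusion $\{\sum W_i<L(\mu-\delta)\}\subseteq\{\sum\widetilde W_i<L(\mu-\delta)\}$ only yields a Hoeffding bound $\exp\bigl(-2L(\delta-b)^2/T^2\bigr)$ \emph{provided} $b<\delta$. But for $T=\log(|\mathcal X|+3)$ the bias is \emph{not} small: each atom $x$ with $p(x|z)<2^{-T}$ contributes at most $p(x|z)\log_2\bigl(2^{-T}/p(x|z)\bigr)\le 2^{-T}\log_2 e/e$, and summing over up to $|\mathcal X|$ such atoms gives $b\le |\mathcal X|\,2^{-T}\log_2 e/e<\log_2 e/e\approx 0.53$, which is essentially attainable (take $|\mathcal X|$ atoms of mass $2^{-T}/e$ each). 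Thus for any $\delta<1/(e\ln 2)$ the truncation absorbs the entire deviation budget and Hoeffding gives nothing; letting $T$ grow with $1/\delta$ to force $b\ll\delta$ pushes a $\log^2(1/\delta)$ into the denominator and misses the claimed constant $\log^2(|\mathcal X|+3)$. The argument in \cite{holenstein2011randomness} avoids truncation entirely: it runs a Chernoff bound directly on $\mathbb{E}[2^{-sW}]=\mathbb{E}[p(X|Z)^s]=\sum_z p(z)\sum_x p(x|z)^{1+s}\le 1$, which is finite for all $s\ge 0$ because the surprisal is one-sided, and controls the curvature of the log-MGF by $\log^2(|\mathcal X|+3)$ via a second-moment bound on $-\log p$. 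That is why the exponent keeps the fixed $\log^2(|\mathcal X|+3)$ constant for all $\delta$, something the truncation route cannot reproduce.
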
          

         \subsubsection{Analysis of Security and 
 Uniformity         } \label{secanalsecu}
         Let $p_{U_{\mathcal S}}$ and $p_{U_F }$ be the uniform distributions over $\{0, 1\}^{tNR_s}$ and $\mathcal R$ respectively. 

          For  $\mathcal{U}\in \mathbb{U}$, we have
\begin{align}\nonumber
& \mathbb V(\widetilde p_{F{(R,\widetilde U^{1:tN})
},R,Y^{1:tN}_{\mathcal{U}}M^t},   p_{U_{\mathcal{S}}} p_{U_{\mathcal{F}}} \widetilde p_{{Y}^{1:tN}_{\mathcal{U}}M^t})\\\nonumber
&\stackrel{(a)}\leq 2\epsilon +\sqrt{2^{
tNR_s-{H_{\infty}^{\epsilon}({\widetilde p_{
U^{1:tN}_{}Y^{1:tN}_{\mathcal{U}}M^t}} |{\widetilde p_{Y^{1:tN}_{\mathcal{U}}M^t})} }}}\\
&\stackrel{(b)}\leq 2\epsilon \!+\!\sqrt{2^{
t(NR_s-H({ \widetilde U^{1:N}|  Y^{1:N}_{\mathcal{U}}M})\!+\!\delta)}} \nonumber \\
&  \leq 2\epsilon \!+\!\sqrt{2^{
t(N(R_s-\min_{\mathcal{U}\in  \mathbb{U}}H(U|Y_{\mathcal{U}}) + \max_{\mathcal{A}\in  \mathbb{A}}H(U|Y^{}_{\mathcal{A}}))+o(N)+\delta)}}
\label{eqn32}
\end{align}
where $(a)$ holds  by Lemma \ref{lem2} with $\epsilon  \triangleq 2^{-\frac{t\delta^2}{2\log^2(|\mathcal{U}|+3)}}$, $\delta>0$, $(b)$  holds by  Lemma \ref{lem3}, $(c)$ holds because for any for $\mathcal{U}\in \mathbb{U} $, we have
    \begin{align}\nonumber
        & H({ \widetilde U^{1:N}|  Y^{1:N}_{\mathcal{U}}M}) \\\nonumber &= H( \widetilde U^{1:N} |  Y^{1:N}_{\mathcal{U}})- I(\widetilde U^{1:N} ;M| Y^{1:N}_{\mathcal{U}}) \\\nonumber
        &\geq H( \widetilde U^{1:N} |  Y^{1:N}_{\mathcal{U}})- H(M)\\\nonumber
   &= H( \widetilde U^{1:N} |  Y^{1:N}_{\mathcal{U}})- N\max_{\mathcal{A}\in  \mathbb{A}}I(U^{}_{};X|Y^{}_{\mathcal{A}})-o(N)
    \\  \nonumber    &= N(H(U|Y_{\mathcal{U}}) -  H(U|X)- \max_{\mathcal{A}\in  \mathbb{A}}I(U^{}_{};X|Y^{}_{\mathcal{A}}))-o(N) \\  \nonumber    &= N(H(U|Y_{\mathcal{U}}) - \max_{\mathcal{A}\in  \mathbb{A}}H(U|Y^{}_{\mathcal{A}}))-o(N)\\     & \geq N(\min_{\mathcal{U}\in  \mathbb{U}}H(U|Y_{\mathcal{U}}) - \max_{\mathcal{A}\in  \mathbb{A}}H(U|Y^{}_{\mathcal{A}}))-o(N)
            ,
   \label{eqn33} \end{align}where the second equality holds because  $H(M) \leq |M|$ and by \eqref{eqnrate1}, the third equality holds by \eqref{eqlemma2}, the fourth inequality holds by the Markov chain $U -X -Y_{\mathcal{A}}$.

        Next, define the  output length of the hash function as 
   \begin{align}
      tNR_s\triangleq t(N(\displaystyle{\min_{{\mathcal{U}}\in  \mathbb{U}}}H({U|{Y_{\mathcal{U}}}})\!-\! {\displaystyle \max_{\mathcal{A}\in  \mathbb{A}}}H(U|Y_{\mathcal{A}}))-o(N)-2\delta).
 \label{eqnr}  \end{align}  
    Then, for any $\mathcal{U}\in  \mathbb{U}$ and $i \in [k]$, \eqref{eqn32}, \eqref{eqn33}, and \eqref{eqnr} yield
    \begin{align}
     \mathbb V(\widetilde p_{F{(R,\widetilde U^{1:tN})
},R,Y^{1:tN}_{\mathcal{U}}M^t},   p_{U_{\mathcal{S}}} p_{U_{\mathcal{F}}} \widetilde p_{{Y}^{1:tN}_{\mathcal{U}}M^t})\leq 2\epsilon +\sqrt{2^{
-t\delta}}.
\label{eqn36}
\end{align}  

Then, by {\cite [Lemma 2.7]{csiszar2011information}}, for $t$ large enough and $\epsilon$ small enough,  we conclude from \eqref{eqn36}   
with $f:x \mapsto x\log (2^{tN}/x)$ 
\begin{align*}
\displaystyle\max_{\mathcal{U}\in  \mathbb{U}} 
I(S; Y^{1:tN}_{\mathcal{U}}M^tR)&\leq f( 2\epsilon +\sqrt{2^{
-t\delta}}),\\
\log |\mathcal{S}|-H(S)&\leq   f(2\epsilon +\sqrt{2^{
-t\delta}}).
 \end{align*} 
 \subsubsection{Analysis of Reliability}  We have
   \begin{align}\nonumber
          \max_{\mathcal{A} \in \mathbb{A}}  \mathbb{P}[S\neq \hat S(\mathcal{A})] \nonumber  &= \max_{\mathcal{A} \in \mathbb{A}} \mathbb{P}[F(R,\widetilde U^{1:tN}_{})\neq F(R,\hat U^{1:tN}_{\mathcal{A}}) ]\\\nonumber
       &\leq \max_{\mathcal{A} \in \mathbb{A}} \mathbb{P}[\widetilde U^{1:tN}_{}\neq \hat U^{1:tN}_{\mathcal{A}} ]
         \\  &\leq t \max_{\mathcal{A} \in \mathbb{A}} \mathbb{P}[\widetilde U^{1:N}_{}\neq \hat U^{1:N}_{\mathcal{A}} ],\label{eqn39}
         \end{align}where the last inequality holds by the union bound.            
         \section{Proof of Theorem 
         \ref{thm3}}\label{proof of theorem 2}  We first propose a simple extension of Theorem~$\ref{thm1}$ in Section \ref{part I}.  Then, in Section \ref{part II}, we describe our coding~scheme for the achievability of Theorem \ref{thm3}.
       
      \subsection{Preliminary Result: Extension of Theorem \ref{thm1}}\label{part I}

Fix a \textcolor{black}{sequence} of joint probability distribution  $(p_{V_{}|UX_{}}p_{U_{}|X_{}}p_{XY_{\mathcal{A}}})_{{\mathcal{A}}\in \mathbb{A}}$. Similar to Section \ref{sec:auxiliary result},  we assume that the participants and dealer observe $k$ blocks of lengths $N$, i.e., 
      $Y^{1:N}_{{\mathcal{A}},1:k} \triangleq (Y^{1:N}_{{\mathcal{A}},i})_{i \in [k]}$ is the source observation  at Decoder ${\mathcal{A}}\in \mathbb{A}$ and  $X^{1:N}_{1:k} \triangleq (X^{1:N}_{i})_{i \in [k] }$ is the source observation at the encoder  with  $p_{X^{1:N}_{1:k}Y^{1:N}_{{\mathcal{A}},1:k}} \triangleq  \prod_{i=1}^{kN} p_{XY_{\mathcal{A}}}$, ${\mathcal{A}}\in\mathbb{A}$.
      
\begin{thm}  Let $N \triangleq 2^n$,  $n\in \mathbb{N}$. Define $\mathcal{M}_U \triangleq [2^{NR_U}]$, $ \mathcal{M}_V \triangleq [2^{NR_V}]$. For any $\epsilon, \delta 
                     >0$, there exist $k,n_0 \in \mathbb{N}$ such that for any $n\geq n_0$, 
                there exists an encoder
     \begin{align*}
         &f
         : \mathcal{X}^{kN}_{} \to \mathcal{V}^{kN}_{}\times \mathcal{M}_U\times \mathcal{M}_V,
           \end{align*}
          and $|\mathbb{A}|$ decoders
      \begin{align*}
       g_{\mathcal{A}}
       : \mathcal{M}_U\times \mathcal{M}_V \times \mathcal{Y}_{{\mathcal{A}}}^{kN} \to \mathcal{V}^{kN}, \forall {\mathcal{A}}\in \mathbb{A}, 
     \end{align*}
such that 
    the message rate $R_U+R_V$ satisfies
      \begin{align}
      (R_U,R_V) =   \left( \max_{ \mathcal{A}\in \mathbb{A}}I(U;X|Y_{\mathcal{A}}), \displaystyle\max_{ \mathcal{A}\in \mathbb{A}}I(V;X|UY_{\mathcal{A}}) \right) + \frac{\delta}{2},
        \label{eqnnewrate}
     \end{align} and the probability of error at Decoder ${\mathcal{A}}\in \mathbb{A}$ satisfies
      \begin{align}
       \mathbb{P} [\widetilde{V}^{1:N}_{1:k}\neq  g_{\mathcal{A}}
      (
      M_U,M_V, {Y}_{{\mathcal{A}},1:k}^{1:N}
      )]<\epsilon, \forall  {\mathcal{A}}\in \mathbb{A},\label{eqnnewerror}
     \end{align} 
     where     $(\widetilde{V}^{1:N}_{1:k}, M_U, M_V) \triangleq f (X^{1:N}_{1:k})$. Furthermore, the  encoding and decoding complexities are 
${O}(kN \log N)$, since the scheme operates over $k$ blocks of length $N$, 
each of which inherits the ${O}(N \log N)$ complexity of 
source polarization~\cite{arikan2010source}.

 Additionally, we have the following two properties.
\begin{itemize}
    \item The probability distribution induced by the encoding scheme $\widetilde p_{U^{1:N}_{i}V^{1:N}_{i}X^{1:N}_{i}}$,  $i\in [k]$,  satisfies  
 \begin{align}
& \mathbb{V} \left( \widetilde{p}_{U_{1:k}^{1:N}V_{1:k}^{1:N}X_{1:k}^{1:N}} , p_{U^{1:N}_{}V_{1:k}^{1:N}X^{1:N}_{1:k}} \right) \leq  \delta_N^{(1)}, \label{eqnref41}
	\end{align}
 where $\delta_N^{(1)}$ is defined in Theorem \ref{thm1}. 
 \item For any $\mathcal{U} \in \mathbb{U}$,
    \begin{align}
&        H(\widetilde V^{1:N}_{1:k} | Y^{1:N}_{\mathcal{U},1:k} \widetilde U^{1:N}_{1:k} ) \nonumber \\
        & \phantom{---}\geq Nk (H(V|Y_{\mathcal{U}}U) -  H(V|UX)) - o(Nk). \label{eqentropyth2}
    \end{align}
 \end{itemize}

             \label{thm4}
     \end{thm}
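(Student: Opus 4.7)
The approach is a two-stage invocation of Theorem~\ref{thm1} in a superposition fashion, producing $\widetilde U^{1:N}_{1:k}$ first and $\widetilde V^{1:N}_{1:k}$ second, with $\widetilde U^{1:N}_{1:k}$ acting as common side information in the second stage. The first stage applies Theorem~\ref{thm1} directly with auxiliary $U$ and source $X$, yielding an encoder that outputs $(\widetilde U^{1:N}_{1:k}, M_U)$ from $X^{1:N}_{1:k}$ with message rate $R_U = \max_{\mathcal{A}\in\mathbb{A}} I(U;X|Y_{\mathcal{A}}) + \delta/2$, per-decoder error at most $\epsilon/2$, distribution $\widetilde p_{U^{1:N}_{1:k} X^{1:N}_{1:k}}$ that is $\delta_N^{(1)}$-close in total variation to $p^{\otimes Nk}_{UX}$, and the conditional-entropy lower bound \eqref{eqlemma2}.

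In the second stage, I apply Theorem~\ref{thm1} again with auxiliary $V$ and source $X$, but now treating $\widetilde U^{1:N}_{1:k}$ as common side information at the encoder (by construction) and at each decoder (after stage-1 decoding). Concretely, this amounts to re-running the polar-code construction of Section~\ref{sec:proof of theorem 3} with all polarization sets conditioned additionally on $U^{1:N}$, i.e., $\mathcal{V}_{V|UX}$, $\mathcal{H}_{V|UY_{\mathcal{A}}}$, and $\mathcal{H}_{V|UY_{\mathcal{U}}}$ in place of their stage-1 counterparts. This produces an encoder emitting $(\widetilde V^{1:N}_{1:k}, M_V)$ with $R_V = \max_{\mathcal{A}\in\mathbb{A}} I(V;X|U Y_{\mathcal{A}}) + \delta/2$ and error $\epsilon/2$ at each decoder conditional on correct recovery of $\widetilde U^{1:N}_{1:k}$. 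Decoder $\mathcal{A}$ then first recovers $\widetilde U^{1:N}_{1:k}$ from $(M_U, Y^{1:N}_{\mathcal{A},1:k})$ and subsequently $\widetilde V^{1:N}_{1:k}$ from $(M_V, \widetilde U^{1:N}_{1:k}, Y^{1:N}_{\mathcal{A},1:k})$. Summing the two rates yields \eqref{eqnnewrate}, a union bound over the two decoding steps yields \eqref{eqnnewerror}, and the complexity remains $O(kN\log N)$.

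The main technical point is the distribution-approximation bound \eqref{eqnref41}. The stage-2 encoder is designed under the assumption that its input $(U^{1:N}, X^{1:N})$ is drawn from the product $p^{\otimes N}_{UX}$, whereas in the combined scheme it receives $(\widetilde U^{1:N}_{1:k}, X^{1:N}_{1:k})\sim \widetilde p_{U^{1:N}_{1:k} X^{1:N}_{1:k}}$. Because the stage-2 encoder is a randomized function of its input, total variation is preserved under its action; combining this with the stage-1 bound $\mathbb{V}(\widetilde p_{U^{1:N}_{1:k}X^{1:N}_{1:k}}, p^{\otimes Nk}_{UX}) \leq \delta_N^{(1)}$ and the stage-2 bound (which would give a $\delta_N^{(1)}$-approximation of $p^{\otimes Nk}_{VUX}$ under an ideal input) via the triangle inequality yields \eqref{eqnref41} up to an absolute constant absorbed into $\delta_N^{(1)}$.

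Finally, \eqref{eqentropyth2} is obtained by repeating the entropy-bound derivation of Section~\ref{sec:proof of theorem 3} on the second stage with the eavesdropper's side information replaced by $(\widetilde U^{1:N}_{1:k}, Y^{1:N}_{\mathcal{U},1:k})$. The same chain of inequalities, with $\mathcal{V}_{V|UX}$, $\mathcal{H}_{V|UY_{\mathcal{U}}}$, and an auxiliary construction analogous to Algorithm~\ref{alg:encoding_6b}, produces the per-symbol rate $H(V|Y_{\mathcal{U}} U) - H(V|UX)$ up to $o(Nk)$; the step that invokes \cite{csiszar1996almost} carries over because the stage-1 distribution-approximation makes the joint $(\widetilde U^{1:N}_{1:k}, X^{1:N}_{1:k})$ close enough to the product to apply the same quasi-i.i.d.\ entropy estimate. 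The dominant technical obstacle throughout is this careful bookkeeping of how the stage-1 approximation error propagates into the stage-2 distributional and entropic guarantees.
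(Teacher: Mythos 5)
Your proposal matches the paper's proof: the paper establishes Theorem~\ref{thm4} by two successive applications of Theorem~\ref{thm1}, first producing $(\widetilde U^{1:N}_{1:k}, M_U)$ from $X^{1:N}_{1:k}$ and then applying Theorem~\ref{thm1} again with the substitution $Y^{1:N}_{\mathcal{A},1:k}\leftarrow(\widetilde U^{1:N}_{1:k}, Y^{1:N}_{\mathcal{A},1:k})$ to produce $(\widetilde V^{1:N}_{1:k}, M_V)$, which is precisely your superposition construction. The paper omits the details of propagating the distribution-approximation and entropy bounds through the second stage; your filled-in arguments (data processing for total variation plus the triangle inequality, and re-running the entropy chain with the conditioning on $\widetilde U$) are the natural way to complete those steps and are consistent with the paper's intent, modulo an absolute constant in $\delta_N^{(1)}$ that you correctly note is immaterial.
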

 Theorem \ref{thm4} is obtained by applying twice Theorem~\ref{thm1}. Specifically, the first time, $(\widetilde{U}^{1:N}_{1:k}, M_U) $ is obtained from $ X^{1:N}_{1:k}$ such that $R_U = \max_{ \mathcal{A}\in \mathbb{A}}I(U;X|Y_{\mathcal{A}}) + \delta /2$ and $\widetilde{U}^{1:N}_{1:k}$ can be reconstructed from $(Y^{1:N}_{{\mathcal{A}},1:k} , M_U)$, $\mathcal{A}\in \mathbb{A}$. The second time, Theorem~\ref{thm1} is applied with the substitution $Y^{1:N}_{{\mathcal{A}},1:k} \leftarrow (\widetilde{U}^{1:N}_{1:k},Y^{1:N}_{{\mathcal{A}},1:k})$ to obtain $(\widetilde{V}^{1:N}_{1:k}, M_V) $ such that $R_V = \displaystyle\max_{ \mathcal{A}\in \mathbb{A}}I(V;X|UY_{\mathcal{A}})  + \delta /2$ and $\widetilde{V}^{1:N}_{1:k}$ can be reconstructed from $(Y^{1:N}_{{\mathcal{A}},1:k} ,\widetilde{U}^{1:N}_{1:k}, M_V)$, $\mathcal{A}\in \mathbb{A}$. Details are omitted due to the similarity with the proof of Theorem~\ref{thm1}.

\subsection{Coding Scheme}\label{part II}
\subsubsection{Coding Scheme}
 Repeat $t$ times and independently the coding scheme of Theorem \ref{thm4}, the random variables induced by these $t$ repetitions are denoted by $(\widetilde U^{1:tN},\widetilde V^{1:tN},  X^{1:tN},Y_{\mathcal{A}}^{1:tN},M_U^t,M_V^t)$. 
Additionally, the estimate of $\widetilde V^{1:tN}$ at Decoder~$\mathcal{A}\in \mathbb{A}$ is denoted by $\hat V^{1:tN}_{\mathcal{A}}$.

Similar to Section \ref{seccodingscheme}, consider a hash function 
$F : \mathcal R \times \{0, 1\}^{tN} \longrightarrow \{0, 1\}^{tNR_s}$,  
 where $R_s$ will be defined later. The encoder forms 
$S\triangleq F(R, \widetilde V^{1:tN})$, where $R$ represents the uniformly random choice of the hash function in a family of two-universal hash functions. Then, Decoder $\mathcal{A}\in \mathbb{A}$ forms  $\hat S({\mathcal{A}}) \triangleq F(R, \hat V_{\mathcal{A}}^{1:tN})$ for $\mathcal{A}\in \mathbb{A}$.

         \subsubsection{Coding Scheme Analysis
         }
         The coding scheme analysis is similar to Section \ref{analysis}, we thus omit the details and only highlight the main difference by showing the counterpart of \eqref{eqn36}. 
         Let $p_{U_{\mathcal S}}$ and $p_{U_F }$ be the uniform distributions over $\{0, 1\}^{tNR_s}$ and $\mathcal R$ respectively. For  $\mathcal{U}\in \mathbb{U}$, similar to \eqref{eqn32}, we~have
\begin{align}\nonumber
& \mathbb V(\widetilde p_{F{(R,\widetilde V^{1:tN}_{})
},R,Y^{1:tN}_{\mathcal{U}}M_U^tM_V^t},   p_{U_{\mathcal{S}}} p_{U_{\mathcal{F}}} \widetilde p_{{Y}^{1:tN}_{\mathcal{U}}M_U^tM_V^t})\\
&\leq 2\epsilon+\sqrt{2^{
t(NR_s-H(
\widetilde V^{1:N}_{}| Y^{1:N}_{\mathcal{U}}M_UM_V)+\delta)}},
\label{eqn59}
\end{align}
where  we have
    \begin{align}\nonumber
& H(\widetilde V^{1:N}_{}| Y^{1:N}_{\mathcal{U}}M_UM_V) \\\nonumber 
& \geq  H(\widetilde V^{1:N}_{}|\widetilde U^{1:N}  Y^{1:N}_{\mathcal{U}}M_UM_V) \\\nonumber 
&= H(
\widetilde V^{1:N}_{}|\widetilde U^{1:N} Y^{1:N}_{\mathcal{U}}M_V) \\\nonumber
   &\geq H(
\widetilde V^{1:N}_{}|\widetilde U^{1:N} Y^{1:N}_{\mathcal{U}}) - H(M_V)
    \\\nonumber
   &= H(
\widetilde V^{1:N}_{}|\widetilde U^{1:N} Y^{1:N}_{\mathcal{U}}) - \displaystyle\max_{ \mathcal{A}\in \mathbb{A}}I(V;X|UY_{\mathcal{A}}) - o(N)\\  \nonumber    &\geq  N(H(V|UY_{\mathcal{U}}) -  H(V|UX)- \displaystyle\max_{ \mathcal{A}\in \mathbb{A}}I(V;X|UY_{\mathcal{A}})\!-\!o(N) \\  \nonumber    &= N(H(V|UY_{\mathcal{U}}) - \max_{\mathcal{A}\in  \mathbb{A}}H(V|UY^{}_{\mathcal{A}}))-o(N)\\     & \geq N(\min_{\mathcal{U}\in  \mathbb{U}}H(V|UY_{\mathcal{U}}) - \max_{\mathcal{A}\in  \mathbb{A}}H(V|UY^{}_{\mathcal{A}}))-o(N)
            ,
   \label{eqn33b} \end{align}
    where the first equality holds because $M_U$ is a function of $\widetilde U^{1:N}$, the third inequality holds by \eqref{eqentropyth2}, the third equality holds by the Markov Chain $ (U,V)-X -Y_{\mathcal{A}}$.

        Next, define the  output length of the hash function as 
   \begin{align}
     & tNR_s \nonumber\\ &  \triangleq tN(\displaystyle{\min_{{\mathcal{U}}\in  \mathbb{U}}}H({V|{UY_{\mathcal{U}}}})- {\displaystyle \max_{\mathcal{A}\in  \mathbb{A}}}H(V|UY_{\mathcal{A}}))-o(tN)-2t\delta.
 \label{eqnrr}  \end{align}  
    Then, for any $\mathcal{U}\in  \mathbb{U}$, $(\ref{eqn59})$ becomes
    \begin{align}\nonumber
    & \mathbb V(\widetilde p_{F{(\!R,\widetilde V^{1:tN}\!)
},R,U^{1:tN}Y^{1:tN}_{\mathcal{U}}M_U^tM_V^t}, p_{U_{\mathcal{S}}} p_{U_{\mathcal{F}}} \widetilde p_{U^{1:tN}Y^{1:tN}_{\mathcal{U}}M_U^tM_V^t}\!)\\
&\leq 2\epsilon +\sqrt{2^{
-t\delta}},
\label{eqn62}
\end{align} 
Finally, uniformity and security follows from \eqref{eqn62} similar to Section \ref{secanalsecu}.

\section{Concluding Remarks}\label{concluding remarks} We considered secret sharing from correlated randomness and rate-limited public communication, and proposed the first explicit coding scheme able to handle arbitrary access structures. Our construction relies on vector quantization coupled with distribution approximation to handle the reliability constraints, followed by universal hashing to handle the security constraints.
Our results also yield explicit capacity-achieving coding schemes for one-way rate-limited secret-key generation, for arbitrarily correlated random variables and without the need of a pre-shared secret. 
While our study focuses on binary alphabets, the proposed coding scheme can be extended to discrete alphabets with prime cardinality by applying \cite[Lemma 7]{chou2016polar} in place of \cite[Lemma 1]{chou2013polar}, and by using \cite[Lemma 6]{chou2016polar} instead of \cite{arikan2010source} when analyzing the cardinality of the polarized index sets. Extending our results to continuous alphabets, however, requires different techniques, e.g., polar lattices \cite{liu2018construction}. Furthermore, while we considered the asymptotic regime, the finite-block length regime remains an open problem, recent works in this direction for compound settings include \cite{rana2023short,sultana2023secret}.
         \bibliographystyle{IEEEtran}       \bibliography{key}  
\end{document}